\newcommand{\bm}{\boldsymbol}
\newtheorem{lemma}{Lemma}
\newtheorem{theorem}{Theorem}
\begin{document}

\title{Spatial Channel Covariance Estimation and Two-Timescale Beamforming
for IRS-Assisted Millimeter Wave Systems}

\author{Hongwei Wang, Jun Fang, Huiping
Duan, and Hongbin Li, ~\IEEEmembership{Fellow,~IEEE}
\thanks{Hongwei Wang, and Jun Fang are with the National Key Laboratory
of Science and Technology on Communications, University of
Electronic Science and Technology of China, Chengdu 611731, China,
Email: JunFang@uestc.edu.cn}
\thanks{Huiping Duan is with the School of Information and Communications Engineering,
University of Electronic Science and Technology of China, Chengdu
611731, China, Email: huipingduan@uestc.edu.cn}
\thanks{Hongbin Li is
with the Department of Electrical and Computer Engineering,
Stevens Institute of Technology, Hoboken, NJ 07030, USA, E-mail:
Hongbin.Li@stevens.edu}
\thanks{\copyright 2022 IEEE. Personal use of this material is permitted. Permission from IEEE must be obtained for all other uses, in
any current or future media, including reprinting/republishing this material for advertising or promotional purposes, creating new collective works, for resale or redistribution to servers or lists, or reuse of any copyrighted component of this work in other works.
}
}

\maketitle

\begin{abstract}
We consider the problem of spatial channel covariance matrix (CCM)
estimation for intelligent reflecting surface (IRS)-assisted
millimeter wave (mmWave) communication systems. Spatial CCM is
essential for two-timescale beamforming in IRS-assisted systems;
however, estimating the spatial CCM is challenging due to the
passive nature of reflecting elements and the large size of the
CCM resulting from massive reflecting elements of the IRS. In this
paper, we propose a CCM estimation method by exploiting the
low-rankness as well as the positive semi-definite (PSD) 3-level
Toeplitz structure of the CCM. Estimation of the CCM is formulated
as a semidefinite programming (SDP) problem and an alternating
direction method of multipliers (ADMM) algorithm is developed. Our
analysis shows that the proposed method is theoretically
guaranteed to attain a reliable CCM estimate with a sample
complexity much smaller than the dimension of the CCM. Thus the
proposed method can help achieve a significant training overhead
reduction. Simulation results are presented to illustrate the
effectiveness of our proposed method and the performance of
two-timescale beamforming scheme based on the estimated CCM.
\end{abstract}

\begin{keywords}
\textbf{
Intelligent reflecting surface, millimeter wave communications,
spatial channel covariance estimation.}
\end{keywords}

\section{Introduction}
Millimeter Wave (mmWave) communication is considered as a
promising technology for future cellular networks due to its
potential to offer gigabits-per-second communication data
rates~\cite{RanganRappaport14}. Nevertheless, due to the small
wavelength, mmWave signals have limited diffraction and scattering
abilities. As a result, mmWave communications are vulnerable to
blockage events, which can be frequent in indoor and dense urban
environments. Intelligent reflecting surface (IRS) has been
recently introduced as a cost-effective and energy-efficient
solution to address the blockage issue for mmWave
communications~\cite{WangFang20-1}. The IRS, also referred to as
reconfigurable intelligent surface (RIS), is a planar array made
of a newly developed metamaterial. It comprises a large number of
reconfigurable passive elements, each of which can independently
reflect the incident signal with a reconfigurable phase shift. By
properly adjusting the phase shifts of the passive elements, IRS
can help realize a programmable and desirable wireless propagation
environment \cite{WuZhang19,ZhengYou2022}.



Channel state information (CSI) acquisition is a pre-requisite to
achieve the full potential of IRS-assisted mmWave systems. There
have been a plethora of studies on how to acquire the
instantaneous CSI (I-CSI) for IRS-assisted mmWave systems.
Specifically, to reduce the training overhead, some works
exploited the inherent sparsity of mmWave channels and developed
compressed sensing-based methods to estimate the cascade
channel~\cite{WangFang20-3,LiuGao20,WeiShen21}. The sparse
scattering characteristics were also utilized to devise fast beam
training/alignment
schemes~\cite{WangZhang21,WangFang22-1,WangFang22-2}, where the
objective is to obtain partial I-CSI to simultaneously align the
beam for both the transmitter-IRS link and the IRS-user link.
Other works, e.g.,~\cite{LinJin21,DeDe21,ZhengWang22}, developed
tensor decomposition-based channel estimation methods by utilizing
some intrinsic multi-dimensional structure of cascade channels.
Despite these efforts, system optimization based on I-CSI is still
considered as a formidable task due to the following difficulties.
First, the coherence time of mmWave channels is drastically
shorter than that of sub-6GHz channels. This implies that channel
estimation and system optimization (i.e. joint active/passive
beamforming) should be performed more frequently, which entails
tremendous computational resources. Second, system optimization
based on I-CSI requires frequent transmissions of control signals
from the base station (BS) to the IRS, which involves a
significant amount of training overhead.


To address the above difficulties, some attempts have been made by
exploiting channel statistics for joint active and passive
beamforming, e.g.,~\cite{YangWang20, HuGao20,
HuWang20,ZhaoWu21,GanZhong21}. Specifically, in~\cite{ZhaoWu21}, a
two-timescale beamforming protocol was proposed for IRS-assisted
systems, where the reflecting coefficients at the IRS are designed
according to the long-term (i.e. statistical) CSI, and the
transmit beamforming matrix is devised based on the instantaneous
equivalent channel in a short-term scale. Statistical CSI, usually
characterized by the spatial CCM, is essential for two-timescale
beamforming in IRS-assisted systems. Obtaining the spatial CCM,
however, is challenging due to the passive nature of reflecting
elements and the large size of the CCM resulting from massive
reflecting elements of the IRS. To our best knowledge, how to
estimate the spatial CCM for IRS-assisted mmWave systems has not
been reported before. Although there are some works on CCM
estimation for conventional mmWave systems, e.g.,
\cite{ParkHeath18,ParkAli19,WangZhang19}, these methods can not be
straightforwardly extended to the IRS-aided systems.


In this paper, we propose a CCM estimation method for IRS-assisted
mmWave systems. The proposed method exploits the low-rankness as
well as the positive semi-definite (PSD) 3-level Toeplitz
structure of the CCM. We formulate the estimation problem as a
semidefinite programming (SDP) problem which is solved by an
alternating direction method of multipliers (ADMM) algorithm. Our
analysis shows that the proposed method is theoretically
guaranteed to attain a reliable estimate of the true CCM with a
sample complexity at the order of $\sqrt{NM}\log{NM}$, which is
much smaller than the dimension of the CCM. Here $N$ and $M$
denote the number of antennas at the transmitter and the number of
reflecting elements at the IRS, respectively. Thus the proposed
method can help achieve a significant training overhead reduction.
Simulation results show that, with a small amount of training
overhead, the proposed method can render a reliable CCM estimate
that helps achieve near-optimal two-timescale beamforming
performance.

The rest of the paper is organized as follows. Section \ref{s2}
discusses the system model, channel model, and the motivation of
the work. Details about the downlink training and the received
signal model are presented in Section \ref{s3}. Section \ref{s4}
proposes an ADMM-based algorithm for CCM estimation. Performance
of the proposed CCM estimation method is analyzed in Section
\ref{s5}. Section \ref{s6} discusses how to perform two-timescale
beamforming based on the estimated CCM. Simulation results are
provided in Section \ref{s7}, followed by concluding remarks in
Section \ref{s8}.

\emph{Notations}: Italic letters denote scalars. Boldface
lowercase and uppercase denote the vectors and matrices,
respectively. Superscripts $\left(\cdot\right)^*$,
$\left(\cdot\right)^T$ and $\left(\cdot\right)^{H}$ denote
conjugate, transpose, and conjugate transpose, respectively.
$\mathbb{E}(\cdot)$ is the expectation operator and $j=\sqrt{-1}$.
$\text{vec}(\cdot)$ is the vectorization operation, which stacks
the columns of a matrix on top of each other. $\bm A\succcurlyeq
\bm B$ means $\bm A - \bm B$ is a positive semidefinite matrix.
The transposed Khatri-Rho, Hadamard and Kronecker product are
denoted by $\bullet$, $\circ$, and $\otimes$ respectively.
$\mathbb{CN}(\mu,\sigma^2)$ means a circularly symmetric complex
Gaussian distribution with mean $\mu$ and variance $\sigma^2$.
$\mathbb{N}(\mu,\sigma^2)$ denotes a real Gaussian distribution
with mean $\mu$ and variance $\sigma^2$. $\mathbb{C}^{N\times M}$
represents the complex space with $N\times M $ dimension.


\section{Problem Formulation}
\label{s2}
\subsection{System Model}
We consider a point-to-point IRS-aided mmWave communication system
as illustrated in Fig.~\ref{f_scenario}. An IRS is deployed to
assist data transmission from the base station (BS) to an
omnidirectional-antenna user. The BS is equipped with a uniform
linear array with $N$ antennas. The IRS is a uniform planar array
consisting of $M=M_v\times M_h$ passive reflecting elements. Each
element can independently reflect the incident signal with a
reconfigurable phase shift. Let
\begin{align*}
\bm \Psi =\text{diag}\left(
e^{j\psi_{1}},\cdots,e^{j\psi_{M}}\right)
\end{align*}
denote the reflecting coefficient matrix of the IRS, where
$\psi_{m}$ is the phase shift associated with the $m$th passive
element.

\begin{figure}[!t]
\centering
\includegraphics[width=0.48\textwidth]{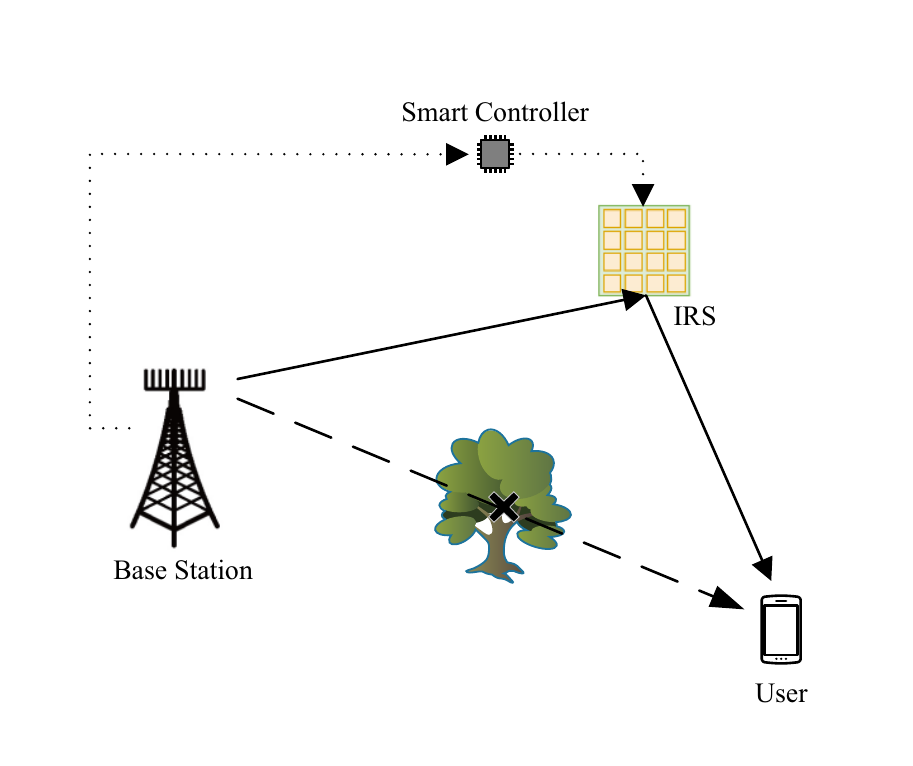}
\caption{An IRS-assisted mmWave communication system.}
\label{f_scenario}
\end{figure}



For simplicity, we assume that the direct link between the BS and
the user is blocked due to poor propagation conditions, and the
transmitted signal arrives at the user via propagating through the
BS-IRS-user channel. Let $\bm G\in\mathbb{C}^{M\times N}$ and $\bm
h\in\mathbb{C}^{M\times 1}$ denote the BS-IRS channel and the
IRS-user channel, respectively. The effective channel between the
BS and the user can thus be expressed as
\begin{align}
\bm {\tilde h}^H &= \bm h^H\bm \Psi\bm G =\bm
\psi^T\text{diag}(\bm h^H)\bm G \triangleq \bm \psi^T \bm H
\end{align}
where $\bm \psi = \text{diag}(\bm \Psi)$ and $\bm H \triangleq
\text{diag}(\bm h^H)\bm G$ is referred to as the cascade channel.
Let $s$ be the transmitted signal and $\boldsymbol{f}$ be the
transmit precoding vector. The signal received at the user can be
expressed as
\begin{align}
y & = \bm {\tilde h}^H \bm f s + n \notag\\
&= (\bm f^T\otimes \bm\psi ^T)\text{vec}(\bm H)s + n \triangleq
\bm w \text{vec}(\bm H)s + n
\end{align}
where $\bm w \triangleq \bm f^T\otimes \bm\psi ^T$ and $n$ is the
additive noise following a complex Gaussian distribution
$\mathbb{CN}(0,\sigma^2)$.


\subsection{Channel Model}
For notational convenience, we first define
\begin{align}
\bm a(\nu,D) \triangleq [1\phantom{0}  \cdots \phantom{0}
e^{j(D-1)\nu}]^T
\end{align}
We adopt a geometric mmWave channel model \cite{ElOmar14} to
characterize the channel. The BS-IRS channel $\bm G$ can be
expressed as
\begin{align}
\bm G &= \sum_{l=1}^L\alpha_l\bm
a_{r}(\theta_{v, l},\theta_{h, l})\bm a_{t}^H(\gamma_l)
\label{bs_irs}
\end{align}
where $L$ is the number of paths between the BS and the IRS,
$\alpha_l$ is the complex gain which is assumed to follow a
complex Gaussian distribution $\mathbb{CN}(0,\varpi_l^2)$,
$\{\gamma_l,\theta_{h ,l},\theta_{v ,l}\}$ are, respectively, the
angle of departure (AoD), the elevation and azimuth angle of
arrival (AoA) associated with the $l$th path; and $\bm
a_t(\gamma_l)$ and $\bm a_r(\theta_{v,l},\theta_{h,l})$ are the
transmit and receive array response vectors. Specifically, we have
\begin{align}
\bm a_t(\gamma_l) = \bm a(\nu_{1,l},N)
\end{align}
where $\nu_{1,l} \triangleq \frac{2\pi d}{\lambda}\sin (\gamma_l)$
with $\lambda$ and $d$ representing the signal wavelength and the
antenna spacing, respectively. $\bm a_r(\theta_{v, l},\theta_{h,
l})$ has a form of a Kronecker product as
\begin{align}
\bm a_r(\theta_{v, l},\theta_{h, l}) = \bm a(\nu_{2,l},M_v)\otimes \bm a(\nu_{3,l},M_h)
\end{align}
where $\nu_{2,l} \triangleq \frac{2\pi d}{\lambda}\cos
(\theta_{v,l})$~and~$\nu_{3,l} \triangleq\frac{2\pi
d}{\lambda}\sin(\theta_{v,l})\cos (\theta_{h,l})$. Define
\begin{align}
\bm\Sigma\triangleq & \text{diag}(\bm \alpha) \\
\bm\alpha \triangleq & [\alpha_1\ \cdots\ \alpha_L]^T \\
\bm F_1 \triangleq & [\bm a_r(\theta_{v ,1},\theta_{h, 1}))\phantom{0}
\cdots \phantom{0}
\bm a_r(\theta_{v, L},\theta_{h, L}) \\
\bm F_2 \triangleq & [\bm a(\nu_{1,1},N)\ \cdots\ \bm
a(\nu_{1,L},N)]
\end{align}
We can express $\bm G$ as
\begin{align}
\bm G = \bm F_1\bm \Sigma\bm F_2^H
\end{align}

Similarly, the IRS-user channel is characterized as
\begin{align}
\bm h = \sum_{p=1}^P\beta_p\bm a_r(\psi_{v, p},\psi_{h, p})
\label{irs_ue}
\end{align}
where $P$ is the number of paths between the IRS and the user,
$\beta_p$ is the complex channel gain which follows
$\mathbb{CN}(0,\chi_p^2)$, $\{\psi_{v, p},\psi_{h ,p}\}$ denotes
the elevation and azimuth AoD associated with the $p$th path, and
$\bm a_r(\psi_{v,p},\psi_{h,p})$ can be expressed as
\begin{align}
\bm a_r(\psi_{v,p},\psi_{h,p}) = \bm a(\nu_{4,p},M_v)\otimes \bm
a(\nu_{5,p},M_h)
\end{align}
where $\nu_{4,p} \!\triangleq\!\frac{2\pi d}{\lambda}\cos
(\psi_{v,p})$ and $\nu_{5,p}\!\!\triangleq\!\!\frac{2\pi
d}{\lambda}\sin(\psi_{v,p})\cos (\psi_{h,p})$. The IRS-user
channel can be written as
\begin{align}
\bm h = \bm F_3\bm \beta
\end{align}
where
\begin{align}
\bm F_3 \triangleq & [\bm a_r(\psi_{v, 1},\psi_{h, 1})\ \cdots \
\bm a_r(\psi_{v ,P},\psi_{h, P})] \\
\bm \beta\triangleq & [\beta_1\ \cdots\ \beta_P]^T
\end{align}


\subsection{Motivations}
Although I-CSI helps achieve optimal beamforming performance,
system optimization based on I-CSI is considered as an
intimidating task due to the associated high computational
complexity and training overhead as noted in Section I.


To address this challenge, notice that acquiring the I-CSI of the
equivalent channel between the BS and the user is much easier and
can be accomplished by conventional channel estimation schemes.
Inspired by this fact, it is natural to consider a two-timescale
beamforming protocol~\cite{ZhaoWu21}. Specifically, in the
long-term time-scale, the reflecting coefficients at the IRS are
designed according to the statistical CSI which varies much more
slowly than the I-CSI. Then, given fixed passive reflecting
coefficients, the BS's transmit beamforming matrix can be
optimally determined based on the instantaneous effective channel
in a short-term time-scale.




A prerequisite for such a two-timescale beamforming protocol is to
obtain the statistical CSI, i.e. the spatial CCM of the cascade
channel. Nevertheless, how to efficiently obtain an estimate of
the CCM for IRS-assisted systems has not been fully considered
before. The objective of this work is to present a method to
estimate the CCM for IRS-aided mmWave systems. Specifically, by
exploiting the PSD multi-level Toeplitz and low-rank structure of
the CCM, we develop a method which is theoretically guaranteed to
attain a reliable estimate of the true CCM with a sample
complexity that is much smaller than the dimension of the CCM.






\section{Downlink Training}
\label{s3} In order to estimate the CCM, we employ a downlink
training procedure consisting of $T$ time frames. We assume that
each time frame has a short period of time so that the channel
remains unaltered. Each time frame consists of multiple training
time slots, say, $J$ time slots. During the training process, the
BS sends a same pilot signal, i.e., $s_{t,j} = 1,\ \forall
j\in\{1,\cdots,J\}$, to the receiver. The pilot signal is precoded
via a transmit precoding vector $\bm f_{t,j}$ that changes over
different time frames and different time slots. The signal
received at the user can be expressed as
\begin{align}
y_{t,j} & = \bm h_t^H\bm \Psi_{t,j}\bm G_t\bm f_{t,j} + n_{t,j}\notag\\
&= \bm w_{t,j}\text{vec}\left(\bm H_t\right) + n_{t,j}
\end{align}
where $\boldsymbol{h}_t$ and $\boldsymbol{G}_t$ respectively
represent the IRS-user channel and the BS-IRS channel at the $t$th
time frame, $\bm \Psi_{t,j}$ is the phase shift matrix that is
employed at the $j$th time slot of the $t$th time frame, $n_{t,j}$
is the additive white Gaussian noise, $\bm H_t\triangleq
\text{diag}(\bm h_t^H)\bm G_t$ is referred to as the cascade
channel at the $t$th frame, and $\bm w_{t,j} \triangleq \bm
f_{t,j}^T\otimes \bm \psi_{t,j}^T$, in which $\bm \psi_{t,j} =
\text{diag}(\bm \Psi_{t,j})$. Define $\bm y_t \triangleq
[y_{t,1}\cdots y_{t,J}]^T$, $\bm n_{t}\triangleq[n_{t,1}\cdots
n_{t,J}]^T$ and $\bm W_t \triangleq [\bm w_{t,1}^T \ \cdots \ \bm
w_{t,J}^T]^T$. To facilitate CCM estimation, we employ the same
$\bm W = \bm W_t,\forall t$ for each time frame. Thus the received
signal at the $t$th frame can be written in a matrix form as
\begin{align}
\bm y_t = \bm W\text{vec}\left(\bm H_t\right) + \bm n_t =\bm W
\boldsymbol{\bar{h}}_t + \bm n_t \label{y_i}
\end{align}
where $\boldsymbol{\bar{h}}_t\triangleq\text{vec}\left(\bm
H_t\right)$.



As reported by previous studies
\cite{VieringHofstetter02,AlkhateebEl14}, an important fact about
mmWave channels is that the angle parameters such as the AoA and
AoD depend only on the relative positions of the BS, the user, and
the scatterers, which vary much more slowly than I-CSI. Also, the
statistics of the complex path gain keep invariant over an
interval that is much longer than the channel coherent time. Hence
it is reasonable to make the following basic assumption:
\begin{enumerate}
\item[A1] The angle parameters and the
variances of the path gains, remains unchanged over a long-term
period.
\end{enumerate}
Our objective is to obtain an estimate of
$\boldsymbol{R}_{h}\triangleq
\mathbb{E}[\boldsymbol{\bar{h}}_t\boldsymbol{\bar{h}}_t^H]$ from
the received signal $\{\bm y_t\}_{t=1}^T$. Note that the CCM
$\boldsymbol{R}_{h}$ has a dimension of $NM\times NM$. We will
show that the knowledge of $\boldsymbol{R}_{h}$ suffices for
optimizing the reflecting coefficients at the IRS.



\subsection{Discussions}
Although this paper considers CCM estimation of the downlink
channel, the proposed method can be readily applied to the
uplink's CCM estimation. Similar to the downlink training, the
uplink training contains $T$ time frames and each time frame
consists of $J$ time slots. During the training phase, the user
sends a same pilot signal $s_{t,j}=1$ to the BS, where the
received signal is combined via a combing vector $\bm g_{t,j}$.
The channel in the same time frame is assumed to be
time-invariant. Therefore the signal received at the $j$th time
slot of the $t$th time frame is given as
\begin{align}
{y}_{t,j}^u& =\bm g_{t,j}^H \bm G_{t}^H \bm \Psi_{t,j} \bm h_t s_{t,j} +{ n}_{t,j}^u\notag\\
& = {\bm w}_{t,j}^u \text{vec}(\bm {{ H}}_t^u) +  {
n}_{t,j}^u \label{up_model}
\end{align}
where $\bm G_t$ and $\bm h_t$ are, respectively, the IRS-BS
channel and the user-IRS channel, ${\bm w}_{t,j}^u\triangleq \bm
\psi_{t,j}^T\otimes \bm g_{t,j}^H$ with $\bm \psi_{t,j} =
\text{diag}(\bm \Psi_{t,j})$ and $\bm { H}_t^u = \bm G_t^H
\text{diag}(\bm h_t)$. Define $\bm { y}_t^u \triangleq [y_{t,1}^u\
\cdots\ y_{t,J}^u ]^T$, $\bm {n}_t^u \triangleq [n_{t,1}^u\
\cdots\ n_{t,J}^u ]^T$, and $\bm { W}_t^u \triangleq [(\bm{
w}_{t,1}^u)^T \ \cdots \ (\bm{ w}_{t,J}^u)^T]$. The received
signal at the BS can be expressed as
\begin{align}
\bm { y}_t^u = \bm{{W}}_t^u\text{vec}(\bm{ H}_t^u) + \bm {
n}_t^u \label{up_train}
\end{align}
If we set $\bm{{W}}_t^u$ the same for different time frames, the
signal model for the uplink training has a same form as that for
the downlink scenario. For time division duplex (TDD) systems, due
to the channel reciprocity between opposite links (downlink and
uplink), the estimated uplink CCM can be used for downlink
precoding/beamforming.




\section{Spatial Channel Covariance Matrix Estimation}
\label{s4}
According
to~\eqref{y_i}, we have
\begin{align}
\bm R_y  = \mathbb{E}\{\bm y_t\bm y_t^H\} = \bm W\bm R_h\bm W^H +
\sigma^2\bm I
\end{align}
Generally the true covariance matrix $\bm R_y$ is unavailable. But
it can be estimated via the following sample covariance matrix
\begin{align}
\bm{\hat{ R}}_y= \frac{1}{T} \sum_{t=1}^T\bm y_t\bm y_t^H
\end{align}
Intuitively, one can directly estimate $\bm R_h$ from $\bm{\hat
R_y}$ via solving the following least squares problem
\begin{align}
\text{vec}(\bm {\hat R_y}) = \left(\bm W^*\otimes \bm W
\right)\text{vec}({\bm R_h}) + \text{vec}(\sigma^2\bm I)
\end{align}
provided that the dimension of $\bm{\hat R_y}$ (i.e. $J\times J$)
is larger than the dimension of $\bm R_h$ (i.e. $MN\times MN$).
Nevertheless, the condition $J\geq NM$ is unlikely to be satisfied
in practice since the coherence time in mmWave systems is
relatively small. Therefore, estimating $\bm R_h$ from $\bm{\hat
R_y}$ is in fact an underdetermined problem, and in order to
handle such an issue we have to exploit the structure of $\bm
R_h$.

\subsection{Exploiting The Structure of $\bm R_h$}
To exploit the structure of $\bm R_h$, we first obtain a sparse
representation of the cascade channel $\boldsymbol{H}_t$. For
notational convenience, in the following we will omit the
subscript $t$ in $\boldsymbol{H}_t$, $\boldsymbol{h}_t$ and
$\boldsymbol{G}_t$. Utilizing the matrix properties, the cascade
channel $\bm H$ can be expressed as
\begin{align}
\bm H & = \text{diag}(\bm h^H)\bm G=\bm h^* \bullet \bm G\notag\\
& =(\bm F_3^*\bm \beta^*)\bullet (\bm F_1\bm \Sigma\bm F_2^H)\notag\\
& = \left(\bm F_3^*\bullet\bm F_1\right)\left(\bm \beta^*\otimes \left(\bm \Sigma\bm F_2^H\right)\right)\notag\\
& = \left(\bm F_3^*\bullet\bm F_1\right)\left(\bm \beta^*\otimes\bm \Sigma\right)\bm F_2^H\notag\\
&\triangleq \bm F_4 \bm \Pi \bm F_2^H \label{cascade_H}
\end{align}
where
\begin{align}
\bm F_4\triangleq&\bm F_3^*\bullet\bm F_1 \\
\bm \Pi\triangleq & \bm \beta^*\otimes\bm \Sigma
\end{align}
Note that the $\varsigma$th ($\varsigma = (p-1)L+l$) column of
$\bm F_4\in\mathbb{C}^{M\times PL}$ is given by
\begin{align}
&\bm a_r^*(\psi_{v, p},\psi_{h, p})\circ \bm a_r(\theta_{v ,l},\theta_{h, l})\notag\\
&\overset{(a)}{=}\left(\bm a^*(\nu_{4,p},M_v)\circ \bm a(\nu_{2,l},M_v)\right)\notag\\
&\qquad\quad\otimes\left(\bm a^*(\nu_{5,p},M_h)\circ \bm a(\nu_{3,l},M_h)\right)\notag\\
&\triangleq \bm a(\nu_{6,\varsigma},M_v)\otimes \bm a(\nu_{7,\varsigma},M_h)
\end{align}
where $\nu_{6,\varsigma} \triangleq\nu_{2,l} -\nu_{4,p} $,
$\nu_{7,\varsigma}\triangleq\nu_{3,l} -\nu_{5,p}$, and $(a)$ follows from
the property: $(\bm A\otimes \bm B)\circ (\bm C\otimes\bm D) =
(\bm A\circ \bm C)\otimes (\bm B\circ\bm D)$.


Vectorizing the cascade channel in~\eqref{cascade_H} leads to
\begin{align}
\text{vec}(\bm H) = \text{vec}(\bm F_4 \bm \Pi \bm F_2^H)
\triangleq \bm F\bm x \label{vec_H}
\end{align}
where $\bm F \triangleq \bm F_2^*\otimes \bm
F_4\in\mathbb{C}^{MN\times L^{2}P}$ and $\bm x \triangleq
\text{vec}(\bm\Pi)\in\mathbb{C}^{L^2 P}$. $\bm F$ can be further
expressed as
\begin{align}
\bm F &= \bm F_2^*\otimes \bm F_4 \notag\\
&= [\bm a^*(\nu_{1,1},N)\ \cdots\ \bm a^*(\nu_{1,L},N)] \notag\\
&\qquad\ \otimes \Big[\bm a(\nu_{6,1},M_v)\otimes \bm a(\nu_{7,1},M_h)\ \cdots \notag\\
&\qquad\qquad\qquad\qquad \bm a(\nu_{6,LP},M_v)\otimes \bm
a(\nu_{7,LP},M_h)\Big] \label{F}
\end{align}
Substituting~\eqref{F} into~\eqref{vec_H}, we have
\begin{align}
&\bm{\bar h}=\text{vec}(\bm H)\notag\\
 &= \sum_{l=1}^{L}\sum_{\zeta=1}^{LP}
x_{\varrho}\bm a^*(\nu_{1,l},N)\otimes\bm
a(\nu_{6,\zeta},M_v)\otimes\bm a(\nu_{7,\zeta},M_h) \label{vvH}
\end{align}
where $x_{\varrho}$ with $\varrho = \zeta + (l-1)LP$ is the
$\varrho$th element of $\bm x$. Due to the fact that $\bm
\Pi=\bm\beta^*\otimes \bm\Sigma$ and $\bm \Sigma $ is a diagonal
matrix, there are at most $LP$ non-zeros elements in $\bm x$, and
each element in $\bm x$ is given by
\begin{align}
x_{\varrho} = \left\{
\begin{array}{ll}
\alpha_l\beta_p^*,&\varrho\in\bm O\\
0,&\text{otherwise}\\
\end{array}
\right.
\end{align}
where the set $\bm O$ is defined as
\begin{align*}
\Big\{(p-1)L+l +
(l-1)LP|l\in\{1,\cdots,L\};p\in\{1,\cdots,P\}\Big\}
\end{align*}

Recall that $\alpha_l\sim\mathbb{CN}(0,\varpi_l^2)$ and
$\beta_p\sim\mathbb{CN}(0,\chi_p^2)$, and $\alpha_l$ and $\beta_p$
are mutually uncorrelated. Therefore for
$x_{\varrho}=\alpha_l\beta_p^*$, its mean and variance are given
as
\begin{align}
\mathbb{E}[x_{\varrho}] & = 0 \notag\\
\mathbb{E}[ x_{\varrho} x_{\varrho}^*] & = \eta_{\varrho}^2
\triangleq \varpi_l^2\chi_p^2
\end{align}
We see that $\bm{\bar h}$ can be characterized by a geometric
channel model. It has $LP$ uncorrelated composite paths in total.
The complex gain of each composite path is a random variable with
zero mean and finite variance. The angular parameters
($\{\nu_{1,l},\nu_{6,\zeta},\nu_{7,\zeta}\}$) associated with each
path are treated as deterministic parameters as angle parameters
vary slowly relative to the complex path gains. Hence the channel
covariance matrix $\bm R_h$ can be expressed as
\begin{align}
\bm R_h & = \mathbb{E}(\bm{\bar h}\bm {\bar h}^H) = \bm F\mathbb{E}(\bm x\bm x^H)\bm F^H\notag\\
& \overset{(a)}{=} \sum_{{\varrho}=1}^{L^2P} \mathbb{E}( x_{\varrho} x_{\varrho}^*)\bm R_{\varrho}\notag\\
& = \sum_{{\varrho}\in \bm O} \eta_{\varrho}^2\bm R_{\varrho}
\label{Rh}
\end{align}
where $(a)$ follows from the fact that $\mathbb{E}(\bm x\bm x^H)$
is a diagonal matrix, and $\bm
R_{\varrho}\in\mathbb{C}^{{NM}\times{NM}}$ is defined as
\begin{align}
\bm R_{\varrho} &= \left(\bm a^*(\nu_{1,l},N)\bm a^T(\nu_{1,l},N)\right)\notag\\
&\qquad\otimes \left(\bm a(\nu_{6,\zeta},M_v)\bm a^H(\nu_{6,\zeta},M_v)\right)\notag\\
&\qquad\qquad\otimes\left(\bm a(\nu_{7,\zeta},M_h)\bm
a^H(\nu_{7,\zeta},M_h)\right)
\end{align}
It can be easily verified that $\bm R_{\varrho}$ is a PSD 3-level
Toeplitz matrix. As a result, $\bm
R_h\in\mathbb{C}^{{NM}\times{NM}}$ is also a PSD 3-level Toeplitz
matrix. Although there are $N^2 M^2$ elements in $\bm R_h$, owing
to the specific structure of PSD 3-level Toeplitz matrix, $\bm R_h$
can be characterized by $(2N-1)(2M_v-1)(2M_h-1)$ parameters which
can be represented by a third-order tensor $\bm
V\in\mathbb{C}^{(2N-1)\times(2M_v-1)\times(2M_h-1)}$, i.e. we can
write $\bm R_h=\mathbb{T}_3(\bm V)$. How to map a third-order
tensor to a 3-level Toeplitz can be found in Appendix~\ref{ap4}.
Furthermore, from~\eqref{Rh}, we know that $\bm R_h$ can be
represented by a summation of $LP$ rank-one matrices. Due to the
sparse scattering characteristics of mmWave channels, $LP$ is
usually much smaller than the dimension of $\bm R_h$ (i.e., $NM$),
meaning that $\bm R_h$ has a low-rank structure.

Utilizing the PSD 3-level Toeplitz structure and the low-rank property
of $\mathbb{T}_3(\bm V)$, the estimation of $\bm R_h$ can be cast
into the following low-rank structured covariance reconstruction
problem:
\begin{align}
\bm{\hat R}_h = &\arg\min_{\mathbb{T}_3(\bm V)}\ \frac{1}{2}
\left\|\boldsymbol{\hat{R}}_y -
\bm W\mathbb{T}_3(\bm V)\bm W^H\right\|_F^2 +\lambda \text{rank}(\mathbb{T}_3(\bm V))\notag\\
& \qquad\text{s.t.}\quad \mathbb{T}_3(\bm V)\succcurlyeq 0
\label{rank_min}
\end{align}
where $\lambda$ is a regularization parameter to balance the
tradeoff between data fitting and low-rankness. Nonetheless, such
a problem is generally NP-hard due to the rank function. To make
it tractable, we resort to convex relaxation to replace
$\text{rank}(\mathbb{T}_3(\bm V))$ with the nuclear-norm of
$\mathbb{T}_3(\bm V)$. Since $\mathbb{T}_3(\bm V)$ is confined to
be a PSD matrix, its nuclear norm is equivalent
to its trace. Consequently, the resulting optimization can be
given by
\begin{align}
\bm{\hat R}_h = &\arg\min_{\mathbb{T}_3(\bm V)}\quad \frac{1}{2}
\left\|\boldsymbol{\hat{R}}_y -
\bm W\mathbb{T}_3(\bm V)\bm W^H\right\|_F^2 +\lambda \text{tr}(\mathbb{T}_3(\bm V))\notag\\
& \qquad\text{s.t.}\qquad \mathbb{T}_3(\bm V)\succcurlyeq 0
\label{sdp}
\end{align}
The above optimization is a convex semidefinite programming (SDP)
problem which can be solved by many standard off-the-shelf
solvers, e.g., CVX. Unfortunately, these solvers are usually
computationally expensive. To reduce the computational complexity,
we develop an alternating direction method of multipliers (ADMM)
algorithm for solving ~\eqref{sdp} in the next section.


\subsection{ADMM-Based Algorithm}
To solve~\eqref{sdp}, we first introduce two auxiliary variables,
$\bm A$ and $\bm B$, and reformulate \eqref{sdp} into the
following optimization
\begin{align}
\{\bm{\hat V},\bm{\hat{ A}},\bm{\hat{B}}\} =&\arg\min_{\bm V,\bm A,\bm B}\ \Big(\frac{1}{2}
\left\|\boldsymbol{\hat{R}}_y - \bm W\bm A\bm W^H\right\|_F^2 \notag\\
&\qquad\qquad\qquad +\lambda \text{tr}(\bm A) + \mathbb{I}_{\infty}(\bm B\succcurlyeq 0)\Big) \notag\\
& \qquad\text{s.t.}\qquad \bm A = \mathbb{T}_3(\bm V),\ \bm B =
\bm A,\
\end{align}
where $\mathbb{I}_{\infty}(a)$ is an indicator function defined as
\begin{align}
\mathbb{I}_{\infty}(a) = \left\{\begin{array}{ll}
0,&\quad \text{if $a$ is true}\\
\infty,&\quad\text{otherwise}
\end{array}\right.
\end{align}
The augmented Lagrangian of the above optimization is read as
\begin{align}
&\mathcal{L}(\bm V,\bm A,\bm B,\bm\Upsilon,\bm \Lambda)\notag\\
&= \frac{1}{2}\left\|\bm{\hat {R}}_y - \bm W  \bm A \bm W^H\right\|_F^2 + \lambda \text{tr}(\bm A) \notag\\
&\qquad+ \left\langle \bm\Upsilon,\bm A - \mathbb{T}_3(\bm V) \right\rangle + \frac{\eta}{2}
\left\|\bm A - \mathbb{T}_3(\bm V)\right\|_F^2\notag\\
&\qquad\quad +  \left\langle \bm\Lambda,\bm B - \bm A
\right\rangle + \frac{\rho}{2}\left\|\bm B - \bm
A\right\|_F^2+\mathbb{I}_{\infty}(\bm B\succcurlyeq 0)
\end{align}
where $\left\langle \bm A,\bm B \right\rangle$ is defined as
$\text{Re}\big(\text{Tr}(\bm B^H\bm A)\big)$, $\bm \Upsilon$ and
$\bm \Lambda$ are the dual parameters, and $\eta,\rho > 0$ are the
penalty parameters. According to the updating rule of the ADMM
algorithm, it consists of solving the following sub-problems:
\begin{align}
\bm {\hat{A}}^{k+1}&= \arg\min_{\bm A}\
\mathcal{L}(\bm V^k,\bm A,\bm B^k,\bm\Upsilon^k,\bm \Lambda^k)\label{A}\\
\bm {\hat{V}}^{k+1}&= \arg\min_{\bm V}\
\mathcal{L}(\bm V,\bm A^{k+1},\bm B^k,\bm\Upsilon^k,\bm \Lambda^k)\label{V}\\
\bm {\hat{B}}^{k+1}&= \arg\min_{\bm B}\
\mathcal{L}(\bm V^{k+1},\bm A^{k+1},\bm B,\bm\Upsilon^k,\bm \Lambda^k)\label{B}\\
\bm \Upsilon^{k+1} & = \bm \Upsilon^k + \eta\big(\bm A^{k+1} - \mathbb{T}_3(\bm V^{k+1})\big)\\
\bm \Lambda^{k+1} & = \bm \Lambda^{k} + \rho \big (\bm B^{k+1} -
\bm A^{k+1}\big)
\end{align}

\emph{Update of $\boldsymbol{A}$:} We first solve the sub-problem
\eqref{A}. Calculating the derivative of $\mathcal{L}(\bm V^k,\bm
A,\bm B^k,\bm\Upsilon^k,\bm \Lambda^k)$ with respect to $\bm A$,
we have
\begin{align}
\nabla_{\bm A} \mathcal{L} &= (\bm W^H\bm W)\bm A (\bm W^H\bm W)^H + (\eta-\rho)\bm A-
\eta \mathbb{T}_3(\bm V^{k}) \notag\\
& \qquad + \rho \bm B^k -\bm W^H\bm
{\hat R}_y\bm W  + \bm\Upsilon^k - \bm \Lambda^k+\lambda\bm  I
\end{align}
By setting $\nabla_{\bm A} \mathcal{L}$ to $0$, we can update $\bm
A$ via solving the following linear equation
\begin{align}
\bm \Xi\bm A \bm \Xi^H + \kappa \bm A = \bm C \label{A_solved}
\end{align}
where $\bm C = \eta \mathbb{T}_3(\bm V^{k}) - \rho \bm B^k + \bm
W^H\bm {\hat R}_y\bm W-\lambda\bm  I- \bm\Upsilon^k + \bm \Lambda^k$, $\kappa = \eta-\rho$ and
$\bm\Xi = \bm W^H\bm W$. Clearly, $\bm A$ can be solved via
\begin{align}
(\bm \Xi^*\otimes \bm \Xi + \kappa \bm I)\text{vec}(\bm A) =
\text{vec}(\bm C)
\end{align}
Nevertheless, this approach has a computational complexity of
$\mathcal{O}(N^6M^6)$. To reduce the computational complexity, we
propose a more computationally-efficient approach to
solve~\eqref{A_solved}. Since $\bm \Xi = \bm W^H\bm W$, $\bm \Xi$
can be diagonalized via the eigen-decomposition (EVD), i.e.,
\begin{align}
\bm \Xi = \bm A_1\bm A_2 \bm A_1^{H} \label{relation1}
\end{align}
where $\bm A_2$ is a diagonal matrix and $\bm A_1$ is a unitary
matrix (meaning $\bm A_1^{H} = \bm A_1^{-1}$).
Substituting~\eqref{relation1} into~\eqref{A_solved} results in
\begin{align}
\bm A_1\bm A_2 \bm A_1^{H}\bm A(\bm A_1\bm A_2 \bm A_1^{H})^H +
\kappa \bm A = \bm C \label{relation2}
\end{align}
By defining $\bm A_3 \triangleq \bm A_1^{H}\bm A \bm A_1$ and $\bm
C_1 \triangleq \bm A_1^{H}\bm C \bm A_1$, we can express
\eqref{relation2} as
\begin{align}
\bm A_2 \bm A_3 \bm A_2 + \kappa \bm A_3 = \bm C_1
\label{relation3}
\end{align}
Since $\bm A_2$ is a diagonal matrix, we can explicitly
solve~\eqref{relation3} in an elementwise manner. Once $\bm A_3$
is obtained, $\bm A$ can be simply reconstructed as
\begin{align}
\bm A = \bm A_1\bm A_3 \bm A_1^H
\end{align}


\emph{Update of $\boldsymbol{V}$:} Keeping the terms that only
depend on $\bm V$ in~\eqref{V}, we have
\begin{align}
\bm {\hat{V}}^{k+1}=&\arg\min_{\bm V}\ \frac{\eta}{2} \text{Tr}
\left(\mathbb{T}_3^H(\bm V)\mathbb{T}_3(\bm V)\right) \notag\\
& \qquad\quad\  -\text{Re}\left(\text{Tr}\big((\bm\Upsilon^{k}+\eta\bm
A^{k+1})^H\mathbb{T}_3(\bm V)\big)\right) \label{cost_V}
\end{align}
Directly taking derivative with respect to $\boldsymbol{V}$ is
difficult. Nevertheless, the elements in $\bm V$ can be optimized
separately. To this end, we reshape $\bm
V\in\mathbb{C}^{(2N-1)\times(2M_v-1)\times(2M_h-1)}$ to a vector
$\bm v\in\mathbb{C}^{(2N-1)(2M_v-1)(2M_h-1)}$. Also, we define an
index set $\mathbb{I}_{v_l}$ as
\begin{align}
\mathbb{I}_{v_l} &= \{ (i,j) | \mathbb{T}_3(\bm V)(i,j)  \equiv
v_l \}
\end{align}
for all $l\in \{1,\cdots,(2N-1)(2M_v-1)(2M_h-1)\}$. The
cardinality of the set $\mathbb{I}_{v_l}$ is denoted by
$|\mathbb{I}_{v_l}|$. Based on this definition, the terms that
only depend on $v_l$ in the cost function~\eqref{cost_V} can be
rewritten as
\begin{align}
\hat{v}_l^{k+1} = \arg\min_{v_l}\frac{\eta|\mathbb{I}_{v_l}|}{2}
v_lv_l^* - \bigg(\sum_{(i,j)\in \mathbb{I}_{v_l}}
\text{Re}\left(\Delta^*(i,j)v_l\right)\bigg)
\end{align}
where $\Delta\triangleq \bm\Upsilon^{k} + \eta\bm A^{k+1}$. It is
clear that $\hat{v}_l^{k+1}$ is given by
\begin{align}
\hat{v}_l^{k+1} =
\frac{1}{\eta|\mathbb{I}_{v_l}|}\bigg(\sum_{(i,j)\in
\mathbb{I}_{v_l}} \Delta(i,j)\bigg)
\end{align}

\emph{Update of $\boldsymbol{B}$:} The optimization in~\eqref{B}
is equivalent to
\begin{align}
\bm {\hat{B}}^{k+1}= \arg\min_{\bm B}\ &\frac{\rho}{2}\text{Tr}(\bm B\bm B^H) +
\text{Re}(\text{Tr}((\bm\Lambda^{k}-\rho\bm A^{k+1})^H\bm B))\notag\\
&\qquad+\mathbb{I}_{\infty}(\bm B\succcurlyeq 0) \label{up_B}
\end{align}
The last term in~\eqref{up_B} makes the optimization problem
intractable. To handle this issue, we propose a two-step solution.
Specifically, we first ignore the positive semi-definite
constraint and solve the following optimization problem
\begin{align}
\bm {\tilde{B}}= \arg\min_{\bm B}\
\frac{\rho}{2}\text{Tr}(\bm B\bm B^H) +
\text{Re}(\text{Tr}((\bm\Lambda^k-\rho\bm A^{k+1})^H\bm B))
\end{align}
which gives
\begin{align}
\bm {\tilde{B}} =\bm A^{k+1}-\bm\Lambda^k/\rho
\end{align}
Then we calculate $\bm {\hat{B}}^{k+1}$ by projecting $\bm
{\tilde{B}}$ onto the positive semi-definite cone, which is
equivalent to setting all negative eigenvalues of $\bm
{\tilde{B}}$ to zero.


\section{Performance Analysis of The CCM Estimator}
\label{s5}
In this section, we analyze the estimation performance of the CCM
estimator (\ref{sdp}). Specifically, we are interested in
quantifying the amount of training overhead required to achieve a
reliable estimate of the true CCM $\bm R_h$. To facilitate our
analysis, we consider the noise-free case, i.e., $\sigma^2 = 0$
and $\bm R_y = \bm W \bm R_h \bm W^H$. In the following we use
$\bm R_h$ and $\mathbb{T}_3(\bm V)$ interchangeably since these
two essentially have the same meaning.

Suppose $\mathbb{T}_3(\bm X)\!\in\!\mathbb{C}^{I_1 I_2 I_3\times
I_1 I_2 I_3}$ is a 3-level Toeplitz matrix parameterized by a
tensor $\bm X \in\mathbb{C}^{(2I_1-1)\times (2I_2-1)\times (2I_3 -
1)}$. For any matrix $\bm M\in\mathbb{C}^{I\times I_1I_2I_3}$,
$\bm {\check M}\in\mathbb{C}^{I^2\times(2I_1-1)(2I_2-1)(2I_3 -
1)}$ is referred to as the transforming matrix of $\bm M$ if it
satisfies
\begin{align}
\bm {\check M} \bm x = \text{vec}(\bm M \mathbb{T}_3(\bm X) \bm
M^H)
\end{align}
where $\bm x$ is a vector by reshaping $\bm X$ into a vector. In
addition, define $r_{e}(\bm A)\triangleq \frac{\text{Tr}(\bm
A)}{\|\bm A\|_2}$ as the effective rank of matrix $\bm A$. Our
result is summarized as follows.

\begin{theorem}
\label{t1} Let $\bm V\in\mathbb{C}^{(2N-1)\times
(2M_v-1)\times(2M_h-1)}$ be the ground truth and $\bm{\hat
V}\in\mathbb{C}^{(2N-1)\times (2M_v-1)\times(2M_h-1)}$ be the
solution of~\eqref{sdp}. Given observations $\{\bm y_t\}_{t=1}^T$,
and set
\begin{align}
J\ge u\triangleq\sqrt{(2N-1)(2M_v-1)(2M_h-1)} \label{cond_J}
\end{align}
and
\begin{align}
\lambda \ge c \|\bm W\|_F^2 \|\bm R_y \|_2\text{max}\{\sqrt{
\tilde{\delta}}, \tilde{\delta}\} \label{cond_lambda}
\end{align}
where $c$ is a constant and $ \tilde{\delta}$ is defined as
\begin{align}
\tilde{\delta} \triangleq \frac{r_{e}(\bm R_y)\log(TJ)}{T}
\label{sigma1}
\end{align}
then with probability at least $1-4T^{-1}$, the average per-entry
root mean square error (RMSE) of the solution to~\eqref{sdp}
satisfies
\begin{align}
\frac{1}{u}\|\bm{\hat V} - \bm V\|_F\le
\frac{16\lambda\sqrt{r}}{\sigma^2_{\text{min}}(\bm {\check W})}
\frac{\sqrt{NM}}{u} \label{v_error}
\end{align}
where $r$ is the rank of $\bm R_h$, $\bm {\check W}$ is the
transforming matrix of $\bm W$, and $\sigma_{\text{min}}(\bm
{\check W})$ denotes the smallest singular value of $\bm {\check
W}$.
\end{theorem}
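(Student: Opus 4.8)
The plan is to combine a deterministic ``basic inequality'' coming from the optimality of the SDP~\eqref{sdp} with a matrix concentration bound for the sample covariance. We work throughout in the noise-free regime $\bm R_y = \bm W\bm R_h\bm W^H$, write $\bm E \triangleq \bm{\hat R}_y - \bm R_y$ for the sampling error, and set $\bm\Delta \triangleq \mathbb{T}_3(\bm{\hat V}) - \mathbb{T}_3(\bm V) = \mathbb{T}_3(\bm{\hat V} - \bm V)$, which is Hermitian since both $\mathbb{T}_3(\bm{\hat V})$ and $\mathbb{T}_3(\bm V)=\bm R_h$ are. Since $\mathbb{T}_3(\bm{\hat V})$ minimizes the objective of~\eqref{sdp} while the ground truth $\mathbb{T}_3(\bm V)$ is feasible, substituting $\bm{\hat R}_y = \bm R_y + \bm E$, expanding $\|\bm{\hat R}_y - \bm W\mathbb{T}_3(\bm{\hat V})\bm W^H\|_F^2$, and cancelling the common $\frac{1}{2}\|\bm E\|_F^2$ term yields
\[
\frac{1}{2}\bigl\|\bm W\bm\Delta\bm W^H\bigr\|_F^2 \;\le\; \bigl\langle \bm W^H\bm E\bm W,\ \bm\Delta\bigr\rangle \;-\; \lambda\,\text{tr}(\bm\Delta).
\]

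Next I would decompose $\bm\Delta$ relative to the range of $\bm R_h$. Let $\mathcal{P}$ be the orthogonal projector onto $\text{range}(\bm R_h)$, of dimension $r$, and write $\bm\Delta = \bm\Delta_1 + \bm\Delta_2$ with $\bm\Delta_1 \triangleq \mathcal{P}\bm\Delta + \bm\Delta\mathcal{P} - \mathcal{P}\bm\Delta\mathcal{P}$ and $\bm\Delta_2 \triangleq (\bm I - \mathcal{P})\bm\Delta(\bm I - \mathcal{P})$. Then $\text{rank}(\bm\Delta_1) \le 2r$; $\bm\Delta_1$ and $\bm\Delta_2$ are Frobenius-orthogonal so $\|\bm\Delta_1\|_F \le \|\bm\Delta\|_F$; and — crucially — the PSD constraint $\mathbb{T}_3(\bm{\hat V}) = \bm R_h + \bm\Delta \succcurlyeq 0$ combined with $(\bm I - \mathcal{P})\bm R_h(\bm I - \mathcal{P}) = \bm 0$ forces $\bm\Delta_2 \succcurlyeq 0$, hence $\text{tr}(\bm\Delta_2) = \|\bm\Delta_2\|_*$. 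Using Hölder's inequality $\langle \bm W^H\bm E\bm W, \bm\Delta\rangle \le \|\bm W^H\bm E\bm W\|_2(\|\bm\Delta_1\|_* + \|\bm\Delta_2\|_*)$, the bound $-\text{tr}(\bm\Delta_1) \le \|\bm\Delta_1\|_*$, and $\|\bm W^H\bm E\bm W\|_2 \le \|\bm W\|_F^2\|\bm E\|_2$, I would then impose $\lambda \ge \|\bm W\|_F^2\|\bm E\|_2$ so that the $\|\bm\Delta_2\|_*$ contribution is non-positive; what survives is $\frac{1}{2}\|\bm W\bm\Delta\bm W^H\|_F^2 \le 2\lambda\|\bm\Delta_1\|_* \le 2\sqrt{2r}\,\lambda\,\|\bm\Delta\|_F$.

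To turn this into~\eqref{v_error} I would invoke the transforming matrix twice. On the left, $\|\bm W\bm\Delta\bm W^H\|_F = \|\bm{\check W}(\bm{\hat v} - \bm v)\|_2 \ge \sigma_{\text{min}}(\bm{\check W})\|\bm{\hat V} - \bm V\|_F$, where $\sigma_{\text{min}}(\bm{\check W}) > 0$ requires $\bm{\check W} \in \mathbb{C}^{J^2 \times (2N-1)(2M_v-1)(2M_h-1)}$ to have full column rank, i.e. $J^2 \ge u^2$ — precisely assumption~\eqref{cond_J}. On the right, since every entry of the parameter tensor is replicated at most $NM$ times inside $\mathbb{T}_3(\cdot)$, $\|\bm\Delta\|_F \le \sqrt{NM}\,\|\bm{\hat V} - \bm V\|_F$. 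Substituting both and dividing by $u\,\|\bm{\hat V} - \bm V\|_F$ gives $\frac{1}{u}\|\bm{\hat V} - \bm V\|_F \le \frac{4\sqrt{2}\,\lambda\sqrt{r}}{\sigma_{\text{min}}^2(\bm{\check W})}\cdot\frac{\sqrt{NM}}{u}$, and the constant $16$ in~\eqref{v_error} comfortably absorbs $4\sqrt2$ together with the extra slack incurred by taking $\lambda$ strictly larger than $\|\bm W\|_F^2\|\bm E\|_2$.

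The remaining — and hardest — step is to verify that under~\eqref{cond_lambda} the event $\lambda \ge \|\bm W\|_F^2\|\bm E\|_2$ holds with probability at least $1 - 4T^{-1}$; equivalently, that $\|\bm E\|_2 = \|\bm{\hat R}_y - \bm R_y\|_2 \lesssim \|\bm R_y\|_2\max\{\sqrt{\tilde{\delta}},\tilde{\delta}\}$ with $\tilde{\delta}$ as in~\eqref{sigma1}. The difficulty is that $\bm y_t = \bm W\bm F\bm x_t$ is not Gaussian: each active entry of $\bm x_t$ is a product $\alpha_l\beta_p^*$ of two independent complex Gaussians and is therefore only sub-exponential, so the classical Gaussian sample-covariance concentration results do not apply verbatim. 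I would handle this by truncating $\|\bm y_t\|$ at a level of order $\sqrt{\log(TJ)}$ — a union bound over the $T$ snapshots controls the discarded tail and is the source of both the $\log(TJ)$ factor in $\tilde{\delta}$ and the constant $4$ in the failure probability — and then applying a matrix Bernstein inequality to $\frac{1}{T}\sum_{t}(\bm y_t\bm y_t^H - \bm R_y)$, whose matrix variance is governed by $\|\bm R_y\|_2$ and the effective rank $r_{e}(\bm R_y)$. The two branches $\sqrt{\tilde{\delta}}$ and $\tilde{\delta}$ appearing in~\eqref{cond_lambda} are exactly the Bernstein variance term and the truncation (``bounded'') term of this tail bound.
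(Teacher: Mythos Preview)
Your deterministic argument is correct and in fact tighter than the paper's, but it takes a genuinely different route. The paper first rewrites~\eqref{sdp} as an \emph{unconstrained} nuclear-norm regularized problem (dropping the PSD constraint and using $\text{tr}=\|\cdot\|_*$ on the cone) and then invokes the decomposable-regularizer framework of Negahban--Ravikumar to obtain the cone condition $\|\bm\Delta_{\bar{\mathcal M}^{\perp}}\|_{*}\le 3\|\bm\Delta_{\bar{\mathcal M}}\|_{*}$; from there it bounds $\|\bm\Delta\|_{*}\le 4\sqrt{r}\|\bm\Delta\|_{F}$ and arrives at the constant~$16$. You instead keep the PSD constraint and exploit it directly: $(\bm I-\mathcal P)(\bm R_h+\bm\Delta)(\bm I-\mathcal P)\succcurlyeq 0$ forces $\bm\Delta_2\succcurlyeq 0$, so $\text{tr}(\bm\Delta_2)=\|\bm\Delta_2\|_{*}$ cancels the corresponding H\"older term as soon as $\lambda\ge\|\bm W^{H}\bm E\bm W\|_2$, without ever establishing a cone condition. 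This is more elementary, uses the feasibility constraint that the paper's reduction silently discards, and yields the sharper constant $4\sqrt{2}$. Both arguments then finish identically via the transforming matrix $\bm{\check W}$ (your condition $J\ge u$ is exactly the paper's Lemma~2 hypothesis) and the replication bound $\|\mathbb T_3(\cdot)\|_F\le\sqrt{NM}\,\|\cdot\|_F$.

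On the probabilistic step you are more careful than the paper. The paper's Lemma~1 simply asserts that $\bm y_t\sim\mathbb{CN}(\bm 0,\bm R_y)$ and applies the Bunea--Xiao Gaussian sample-covariance bound verbatim; your observation that the cascade gains $\alpha_l\beta_p^{*}$ are products of independent Gaussians, hence only sub-exponential, is correct, and the truncation\,+\,matrix-Bernstein route you sketch is the standard fix. The $\max\{\sqrt{\tilde\delta},\tilde\delta\}$ form and the $r_e(\bm R_y)\log(TJ)/T$ rate are consistent with either argument, so the final statement is unaffected, but your proof would close a gap that the paper leaves open.
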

\begin{proof}
See Appendix \ref{A1}.
\end{proof}



The above theorem is a generalization of Theorem 4
in~\cite{LiChi15}. Specifically, \cite{LiChi15} analyzes the
structured covariance estimation performance under a partial
observation framework where the aim is to recover the complete CCM
from a submatrix of the sample CCM. Theorem~\ref{t1} generalizes
this partial observation model to an arbitrarily linear
compression model in which we do not have direct access to the
entries of the sample CCM; instead, only the sample covariance
matrix $\boldsymbol{\hat{R}}_y$ is available. Also, the CCM in
this work has a multi-level Toeplitz structure, which is different
from that of~\cite{LiChi15}.

Recalling $M=M_v M_h$, the term $\sqrt{NM}/u$ in \eqref{v_error}
tends to be a constant for sufficiently large values of $M$ and
$N$. Therefore, the average per-entry RMSE is upper bounded by
$\lambda\sqrt{r}$ times a scale factor. According to
\eqref{cond_lambda}, we know that
\begin{align}
\lambda\sqrt{r}\geq \bar u \triangleq c \sqrt{r} \|\bm W\|_F^2
\|\bm R_y \|_2\text{max}\{\sqrt{  \tilde{\delta}},
\tilde{\delta}\} \label{lambda_r}
\end{align}
Therefore, in our optimization problem~\eqref{sdp}, we set
$\lambda=\bar u/\sqrt{r}$ such that the average pre-entry RMSE of
$\bm V$ has the smallest upper bound, i.e.,
\begin{align}
&\frac{1}{u}\|\bm {\hat V} - \bm V\|_F\notag\\
&\le\frac{\sqrt{NM}}{u}\frac{16 \bar u}{\sigma^2_{\text{min}}(\bm{\check W})}\notag\\
&=\frac{\sqrt{NM}}{u}\frac{16 c \|\bm W\|_F^2 \|\bm R_y
\|_2\text{max}\{\sqrt{  r\tilde{\delta}}, \sqrt{r}\tilde{\delta}\}}{\sigma^2_{\text{min}}(\bm{\check W})}
\label{upper_bound}
\end{align}
From~\eqref{upper_bound}, we can see that the average pre-entry
RMSE vanishes as long as $r\tilde{\delta}$ tends to $0$. To this
objective, it can be verified that the number of time frames $T$
should be in the order of $r_e(\bm R_y)r\log(J)$ or in the order
of $r^2\log (J)$ since the effective rank of a matrix is no
greater than its true rank, i.e. $r_e(\bm R_y)\leq
r$~\cite{NegahbanRavikumar12}. To see why $r\tilde{\delta}$ tends
to $0$ when $T\sim \mathcal{O}(r^2\log (J))$, let $T = \tau r^2
\log (J)$, where $\tau$ is a constant. In this case, we have
\begin{align}
r\tilde{\delta}\leq\frac{\log(\tau r^2\log(J))}{\tau\log(J)}
\end{align}
where the right-hand side of the above inequality decreases to a
small value as $\tau$ increases. In summary, when the number of
time frames $T$ is in the order of $r^2\log (J)$, the average
per-entry RMSE can be upper bounded by an arbitrarily small value,
which means that we can obtain a reliable estimate of the true CCM
$\boldsymbol{R}_h$. Note that our performance guarantee is
non-asymptotic and holds for a finite number of measurement
vectors. In other words, the result has accounted for the
covariance estimation error due to finite samples.


Recall that the proposed downlink training protocol consists of
$T$ time frames, and each time frame comprises $J$ time slots.
Thus the total amount of training overhead is $TJ$. Since the
number of time slots $J$ has to satisfy \eqref{cond_J}, which
means that $J$ is on the order of $\sqrt{NM}$. Therefore, the
total amount of training overhead is at the order of
$r^2\sqrt{NM}\log(NM)$. Note that here $r = PL$ is the rank of
$\bm R_h$. In mmWave systems, due to sparse scattering
characteristics, both $P$ and $L$ are relatively small. Hence $r$
is generally far less than $NM$. Based on this, we can conclude
that when the total number of training symbols is in the order of
$\sqrt{NM}\log{NM}$, we can provide a reliable estimate of the CCM
$\boldsymbol{R}_h$. Note that $\sqrt{NM}\log{NM}$ is much smaller
than the dimension of $\bm R_h$ (i.e., $N^2 M^2$).

\section{Two-Timescale Beamforming Based on Estimated CCM}
\label{s6}
In this section, we discuss how to perform two-timescale
beamforming based on the estimated CCM. For the considered
point-to-point IRS-assisted mmWave system, the received signal at
the user can be expressed as
\begin{align}
y_t = \bm h^H\bm \Psi\bm G\bm f s_t + n_t
\end{align}
where $s_t $ is the transmitted symbol satisfying
$\mathbb{E}(|s_t|^2) = 1$, $\bm f$ is the precoding vector, and
$n_t\sim\mathbb{CN}(0,\sigma^2)$ denotes the zero-mean complex
Gaussian noise. The achievable spectral efficiency can be
expressed as
\begin{align}
R = \log_2(1 + {|\bm h^H\bm \Psi\bm G \bm f|^2}/{\sigma^2})
\end{align}
To circumvent the need for the I-CSI $\boldsymbol{h}$ and
$\boldsymbol{G}$, we adopt a two-timescale beamforming approach.
Specifically, in the long-term time-scale, the reflecting
coefficients at the IRS are optimized based on the estimated CCM.
Then, given fixed passive reflecting coefficients, the BS's
transmit beamforming matrix can be optimally determined based on
the instantaneous effective channel in a short-term time-scale.
Such a joint beamforming problem can be formulated into the
following optimization:
\begin{align}
\max_{\bm \psi}\ \ &\mathbb{E}\bigg\{\max_{\bm f} R \bigg\} \notag\\
s.t.\quad & |\psi_m| = 1 \ \ \forall m \in \{1,\cdots,M\}\notag\\
& \|\bm f\|^2 \le P_{\text{max}} \label{opt_beamforming}
\end{align}
where $\bm \psi=\text{diag}(\bm\Psi)$ with $\psi_m$ as its $m$th
element and $P_{\text{max}}$ is the total transmit power budget.
The optimization problem in~\eqref{opt_beamforming} has two
levels. The inner one is the rate-maximization problem with
respect to $\bm f$. This is implemented in each channel
realization with the given phase shift matrix $\bm\Psi$. The outer
one is an expectation maximization problem over the IRS phase
shift coefficients in which the expectation of the achievable rate
is taken over all possible channel realizations.

When $\bm \Psi$ is given, the optimal precoding vector is the
maximum-ratio transmission (MRT) which is explicitly given as
\begin{align}
\bm f = \sqrt{P_{\text{max}}}\frac{(\bm h^H\bm\Psi\bm G)^H}{\|\bm
h^H\bm\Psi\bm G\|_2} \label{opt_w}
\end{align}
Substituting \eqref{opt_w} into~\eqref{opt_beamforming}, we arrive
at the following optimization problem which is only related to
$\bm\Psi$:
\begin{align}
\max_{\bm \psi}\ \ &\mathbb{E}\left\{\log_2(1 + P_{\text{max}}\|\bm h^H\bm\Psi\bm G\|^2/\sigma^2) \right\} \notag\\
s.t.\quad & |\psi_m| = 1 \ \ \forall m \in \{1,\cdots,M\}
\label{opt_psi}
\end{align}
Directly solving~\eqref{opt_psi} is intractable. The major reason
is that its cost function is the expectation of a logarithmic
function, which in general does not have an explicit expression.
To handle this issue, we resort to maximize its tight upper bound,
which is given by~\cite{ZhaoWu21}
\begin{align}
&\mathbb{E}\{\log_2(1 + P_{\text{max}}\|\bm h^H\bm\Psi\bm G\|^2/\sigma^2) \}\notag\\
&\qquad\qquad\le \log_2(1 + P_{\text{max}}\mathbb{E}\{\|\bm
h^H\bm\Psi\bm G\|^2\}/\sigma^2) \label{upp_bound}
\end{align}
Note that the upper bound shown in~\eqref{upp_bound} is
sufficiently tight and thus is a good approximation of the
original objective function, especially when
$P_{\text{max}}/\sigma^2$ is large. Such an optimization trick has
been used in existing literatures, e.g.\cite{ZhaoWu21,HuWang20}. Maximizing the
upper bound yields the following optimization:
\begin{align}
\max_{\bm \psi}\ \ &\mathbb{E}(\|\bm h^H\bm\Psi\bm G\|^2)\notag\\
s.t.\quad & |\psi_m| = 1 \ \ \forall m \in \{1,\cdots,M\}
\label{opt_upt1}
\end{align}
where $\mathbb{E}(\|\bm h^H\bm\Psi\bm G\|^2)$ is given by
\begin{align}
\mathbb{E}(\|\bm h^H\bm\Psi\bm G\|^2) &= \mathbb{E}(\|\bm \psi^T \bm H\|^2)\notag\\
&= \bm \psi^T\mathbb{E}(\bm H\bm H^H)\bm \psi^*\notag\\
& = \bm \psi^T \bm {\bar R}_h \bm \psi^* \label{opt_upt1_cost}
\end{align}
in which $\bm {\bar R}_h$ is the covariance matrix of the cascade
channel $\bm H$. Note that $\bm {\bar R}_h =\mathbb{E}(\bm H\bm
H^H) $ and $\bm R_h = \mathbb{E}(\text{vec}(\bm H)\text{vec}(\bm
H)^H)$. Therefore, $\bm {\bar R}_h$ can be directly obtained from
$\bm R_h$. Specifically, the $(i,j)$th element of $\bm{\bar R}_h$
can be calculated by
\begin{align}
\bm{\bar R}_h(i,j) = \sum_{k=1}^N \bm R_h(i+(k-1)M,j+(k-1)M)
\end{align}
for $1\le i,j\le M$.

The optimization problem~\eqref{opt_upt1} with its cost function
given in~\eqref{opt_upt1_cost} is a nonconvex quadratically
constrained quadratic problem, and it can be relaxed as the
following semidefinite programming (SDP) problem
\begin{align}
\max_{\bm V} \ &\text{Tr}(\bm {\bar R}_h \bm { V})\notag\\
s.t.\  &  { V}_{ii} = 1\notag \\
\qquad & \bm { V}\succcurlyeq 0\notag\\
\qquad & \text{rank}(\bm{ V}) = 1 \label{opt_r1}
\end{align}
where $\bm { V} = \bm \psi^*\bm \psi^T $ is a rank-one matrix and ${
V}_{ii}$ is the $i$th diagonal element of $\bm  V$. When ignoring
the rank-one constraint, \eqref{opt_r1} can be immediately solved
by the off-the-shelf SDR programming toolboxes, e.g., the
CVX. After obtaining the optimal solution $\bm { V}$, we
need to find a feasible solution $\bm {v}$ from $\bm { V}$. One
efficient approach is the Gaussian randomization approximation
solution~\cite{SoZhang07}, and the detailed procedures can be found within.


\section{Simulation Results}
\label{s7}
In this section, we present simulation results to illustrate the
effectiveness of the proposed low-rank PSD Toeplitz-structured CCM
(LRT-CCM) estimation method. We compare our method with the
conventional CCM estimation approach which estimates the channel
at each time frame and then reconstructs the CCM with these
estimated channel samples. Such an approach is referred to as the
conventional CCM estimation method. For this approach, we use the
compressed sensing-based method~\cite{WangFang20-3} to estimate the channel at
each time frame.



In our simulations, the BS employs a uniform linear array (ULA) of
$N = 8$ antennas and the IRS is a planar array with $M_v\times M_h
=16\times 16$ reflecting elements. The three-dimensional
coordinates of the BS, the IRS, and the user are set to
$(5,0,10)$, $(0,50,20)$ and $(10,60,1.8)$, respectively. The
BS-IRS channel and the IRS-user channel are generated according
to~\eqref{bs_irs} and~\eqref{irs_ue}, respectively. The number of
the signal paths is set to $L=P=3$, and each channel comprises an
LOS path and two NLOS paths. The corresponding angles (including
AoAs and AoDs) of the LOS paths are determined by the geometry
configuration, and the associated complex gains of the LOS paths
are generated according to a complex Gaussian distribution
$\mathbb{CN}(0,10^{-0.1\kappa})$, where $\kappa = 61.4 +
29.2\log_{10}(d)+\epsilon$ with $d$ denoting the length of the
path and $\epsilon$ being a random variable following
$\mathbb{N}(0,(8.7 \text{dB})^2)$. The angles associated with the
NLOS paths are randomly selected from the interval $[-\pi,\pi]$.
The channel coefficients of these NLOS paths follow a distribution
$\mathbb{CN}(0,\delta^2)$, where $\delta$ is determined by the
Rician factor (i.e., the ratio of the energy of the LOS path to
that of all NLOS paths). We set the Rician factor to $10$ dB and
the transmitted power to $P_{\max} = 30\text{dBm}$. The
signal-to-noise ratio (SNR) is defined as
\begin{align}
\text{SNR} =
\mathbb{E}\left(10\log_{10}\left({\varsigma^2}/{\sigma^2}\right)\right)
\end{align}
where $\varsigma^2$ is the received signal power.

We evaluate the CCM estimation performance via the relative
efficiency metric (REM), which is widely adopted~\cite{HaghighatshoarCaire16,ParkPark17,AnjinappaGurbuz20} and
defined as
\begin{align}
\eta = \frac{\text{tr}(\bm U_1^H\bm R_h \bm U_1)}{\text{tr}(\bm
U_2^H\bm R_h \bm U_2)}
\end{align}
where $\bm U_1$ and $\bm U_2$ are, respectively, the matrices
constructed by the eigenvectors of the estimated CCM $\bm{\hat
R}_h$ and the eigenvectors of the true CCM $\bm R_h$. Clearly, a
higher value of $\eta$ indicates a more accurate CCM estimate. The
value $1-\eta$ means the lost of the signal power due to the
mismatch between the optimal beamformer and the estimated one. All
results are averaged over 50 independent Monte Carlo runs.



Fig.~\ref{f1}(a) plots the REMs of different methods as a function
of the SNR, where the number of time slots $J$ is set to $120$ and
the number of time frames $T$ is set to $100$. We see that our
proposed method presents a substantial performance improvement
over the conventional CCM estimation method, particularly in the
low SNR regime. In fact, our proposed method can still provide a
reliable CCM estimate even when the SNR is below -10dB, whereas
the conventional CCM estimation method performs poorly in such a
low SNR region. In Fig.~\ref{f1}(b), we plot the achievable rate
attained by the two-timescale beamforming scheme based on the
estimated CCM. To better evaluate the performance, we also include
the achievable rates attained by the two-timescale beamforming
scheme based on the true CCM, the two-timescale beamforming scheme
in which the reflecting coefficients are randomly chosen from a
unit circle (referred to as random passive beamforming), and the
joint beamforming scheme~\cite{YuXu19} that utilizes the true
I-CSI. Note that the beamforming approach~\cite{YuXu19} that
exploits the I-CSI provides an upper bound on the performance that
is achievable by any two-timescale beamforming schemes.

Several points can be made from Fig.~\ref{f1}(b). First, our
proposed method incurs only a very mild performance loss as
compared with the two-timescale beamforming scheme based on the
true CCM. This result indicates that the proposed method can yield
a CCM estimate that is good enough for subsequent beamforming.
Second, the two-timescale beamforming scheme can achieve
performance close to that of the beamforming method that utilizes
the I-CSI, which demonstrates the effectiveness of the
two-timescale beamforming scheme. Lastly, all methods present a
substantial performance advantage over the random passive
beamforming scheme.

\begin{figure*}[!t]
\centering \subfloat[REM]{
\includegraphics[width=0.48\textwidth]{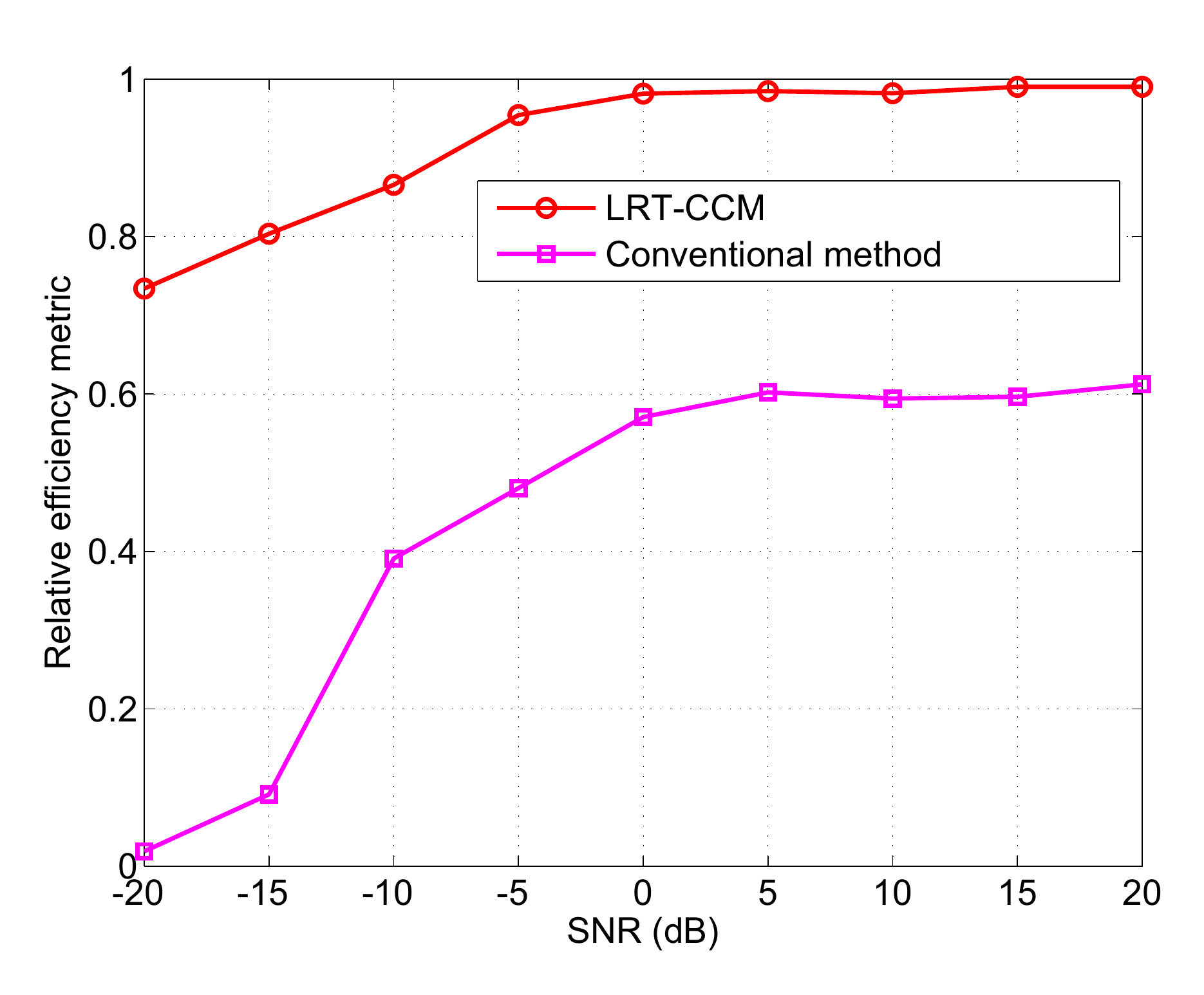}}\quad
\subfloat[Achievable
rate]{\includegraphics[width=0.48\textwidth]{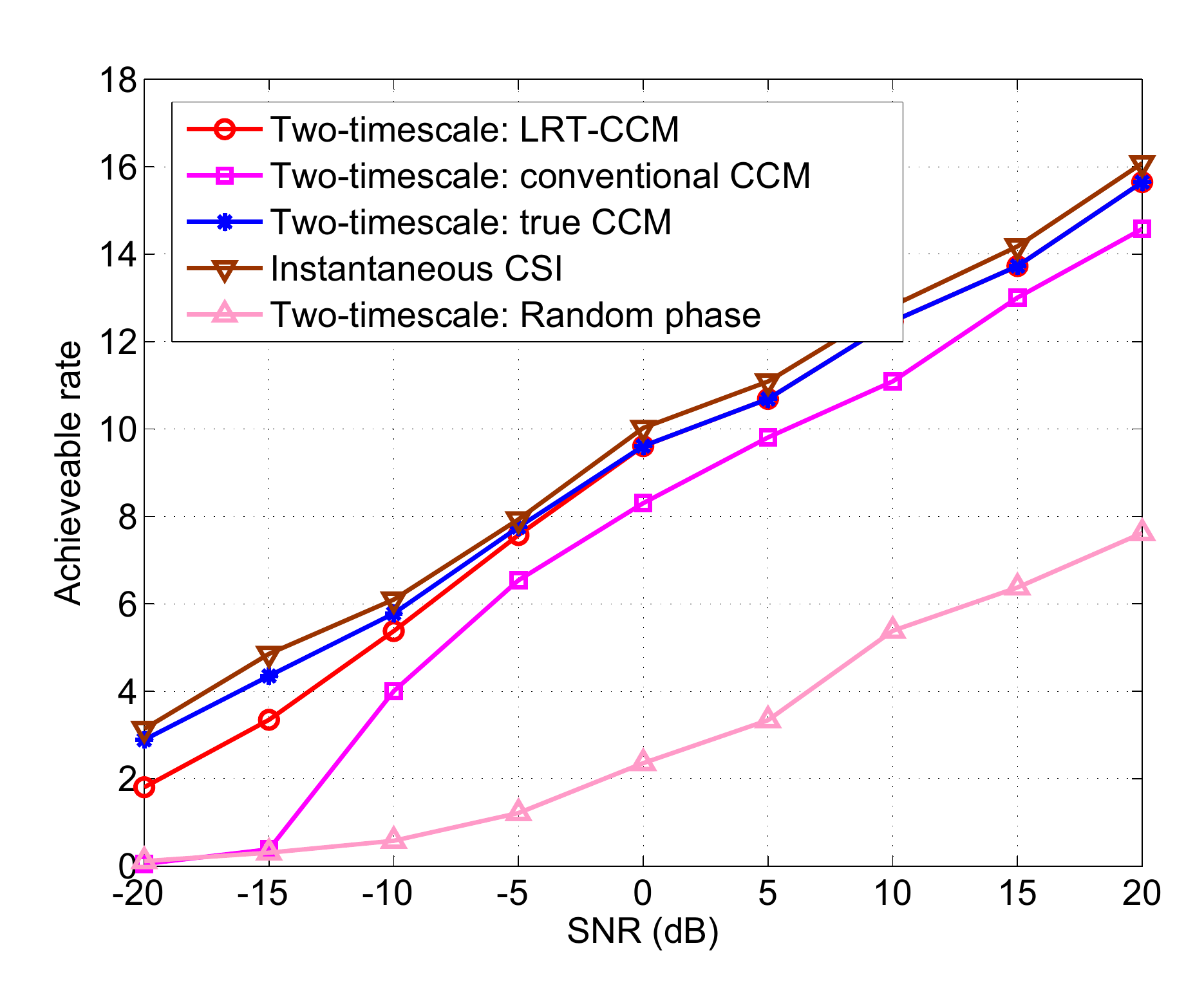}\label{f1b}}
\caption{REMs and achievable rates of respective methods, where we
set $T=100$ and $J = 120$.} \label{f1}
\end{figure*}

\begin{figure*}[!t]
\centering
\subfloat[REM]{\includegraphics[width=0.48\textwidth]{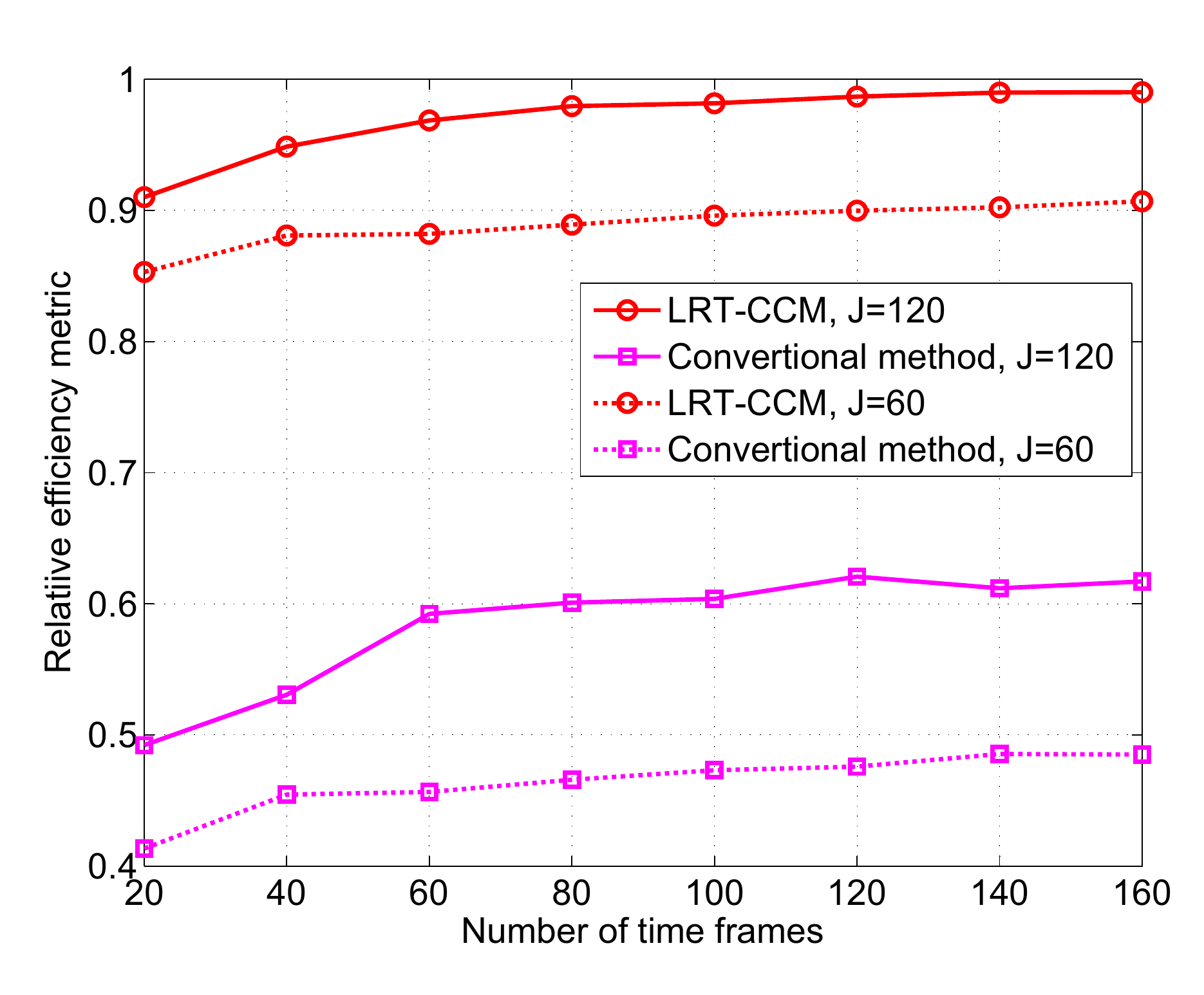}}\quad
\subfloat[Achievable
rate]{\includegraphics[width=0.48\textwidth]{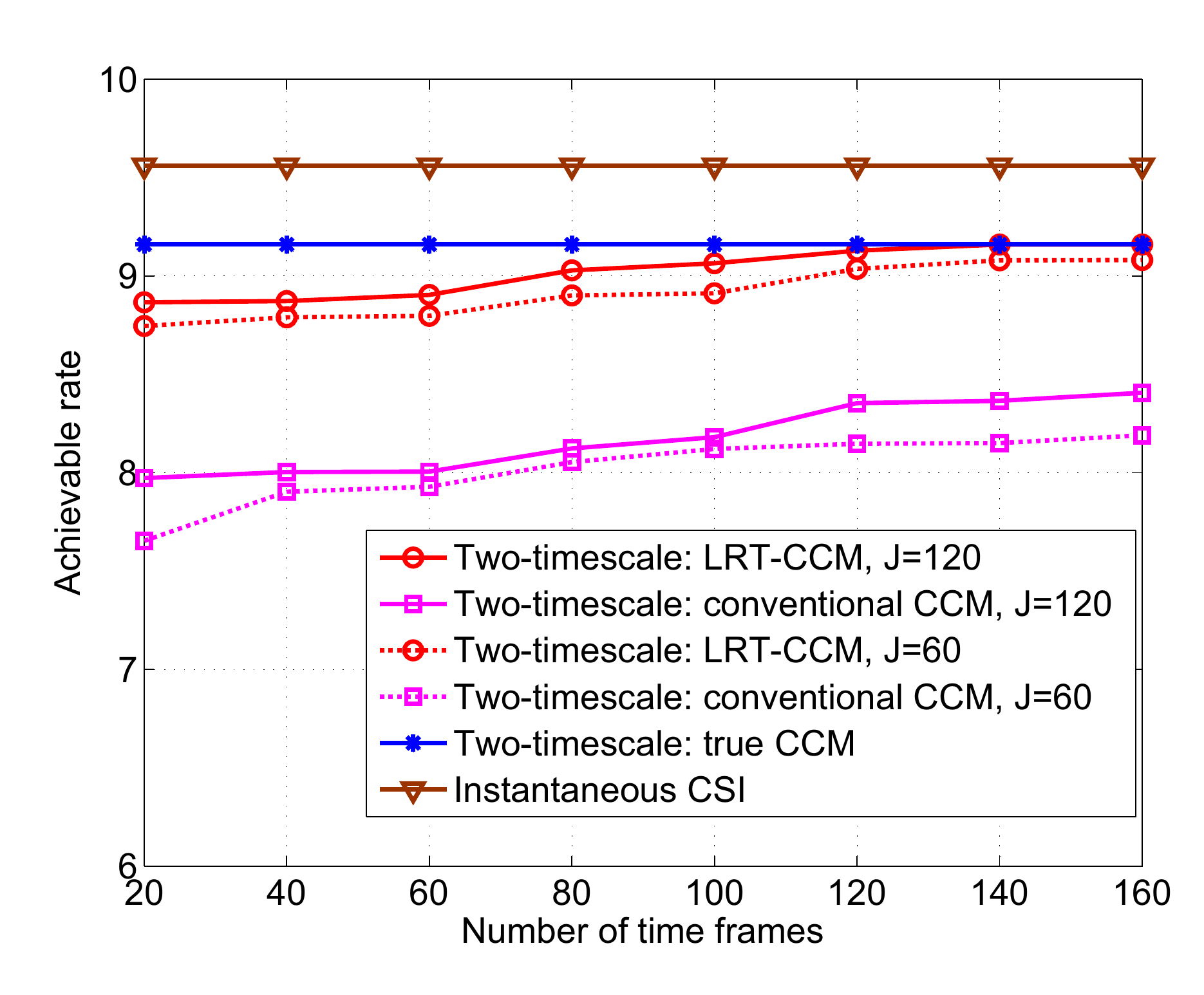}}
\caption{REMs and achievable rates of respective methods, where we
set $\text{SNR}=0$dB, $J = 60$ and $J = 120$.} \label{f2}
\end{figure*}

Next, we examine the impact of the number of time frames on the
estimation and beamforming performance. Fig.~\ref{f2} plots the
performance of respective methods as a function of $T$, where we
set SNR to $0$dB, and $J$ is set to $J=60$ and $J=120$,
respectively. It can be observed from Fig.~\ref{f2} that a small
value of $T$, say $T=20$ is sufficient to achieve a decent
performance for our proposed method. Increasing the number of time
frames can lead to better performance for both methods, but the
performance improvement is very limited. Since the total number of
measurements required for training is $TJ$, this result suggests
that our proposed method can provide a reliable CCM estimate using
a training overhead as small as $TJ=20\times 60=1200$.
Fig.~\ref{f3} illustrates the effect of the number of time slots
on the estimation and beamforming performance of respective
methods, where we set $\text{SNR}=0\text{dB}$, and $T$ is set to
$T=40$ and $T=100$, respectively. We see that increasing the
number of time slots leads to better performance. Nevertheless, a
small value of $J$, say, $J=60$, is enough to provide a decent
performance for our proposed method. Again, this result
demonstrates the ability of the proposed method in providing an
accurate CCM using a small amount of training overhead.



\begin{figure*}[!t]
\centering
\subfloat[REM]{\includegraphics[width=0.48\textwidth]{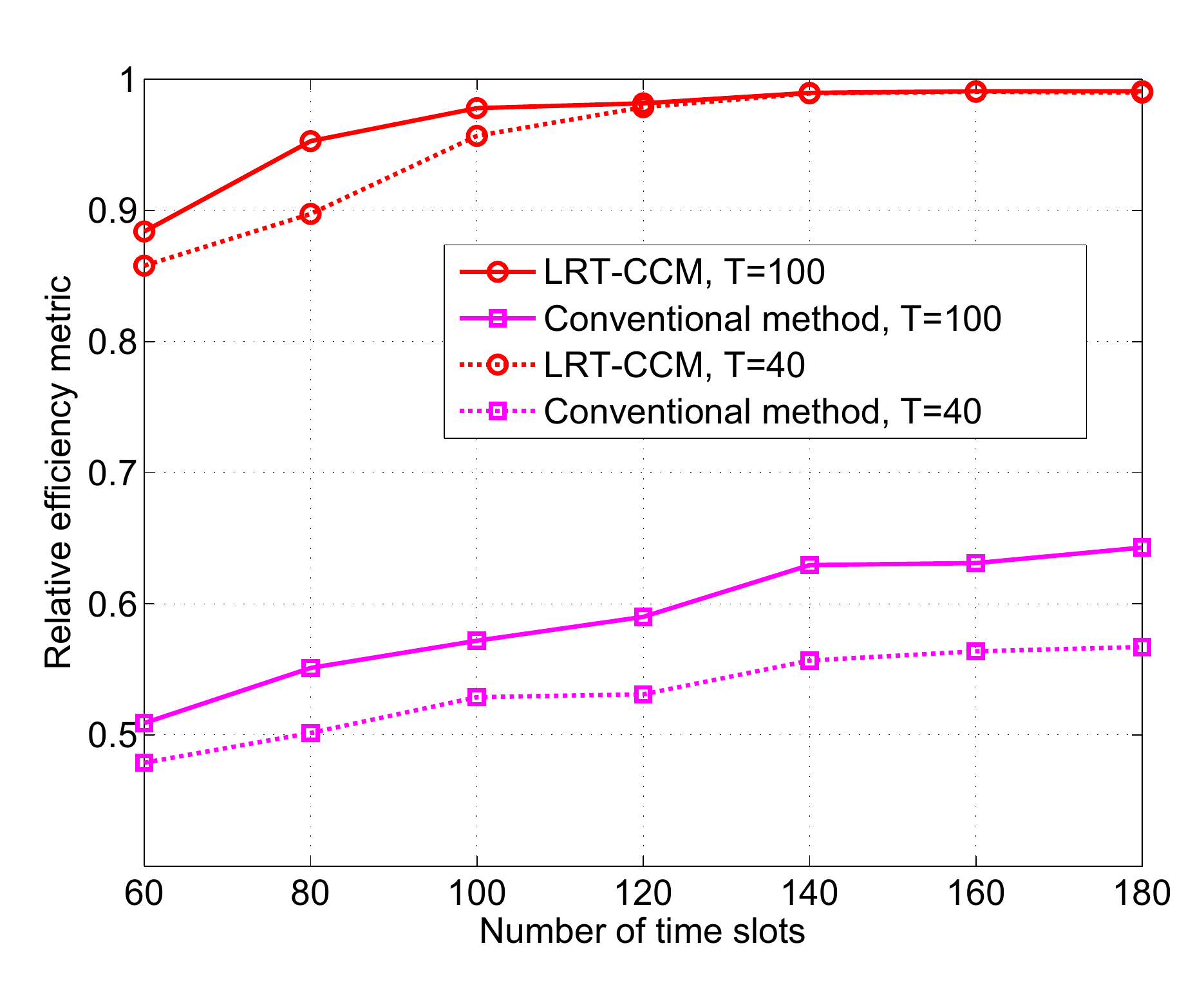}}\quad
\subfloat[Achievable
rate]{\includegraphics[width=0.48\textwidth]{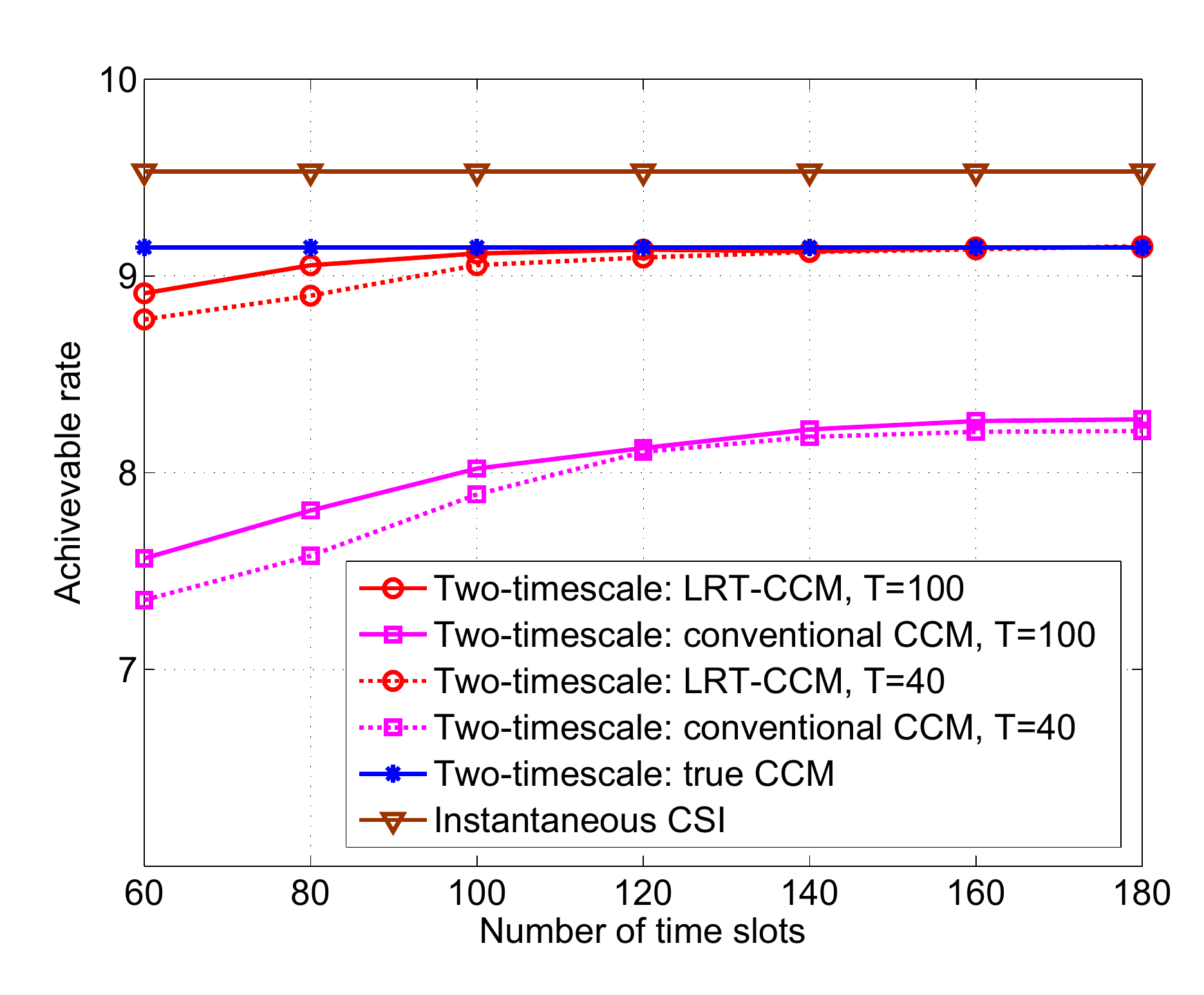}}
\caption{REMs and achievable rates of respective methods, where we
set $\text{SNR}=0$dB, $T=40$ and $T = 100$.} \label{f3}
\end{figure*}

\section{Conclusions}
\label{s8} In this paper, we considered the CCM estimation for
IRS-assisted mmWave communication systems. We exploited the
low-rank property and PSD 3-level Toeplitz structure of the CCM
and formulated the CCM estimation problem as a convex SDP problem,
which was further efficiently solved by an ADMM algorithm. In
addition, we analyzed the estimation performance of the proposed
solution, as well as the training overhead required to obtain a
reliable estimate of the CCM. Lastly, we discussed how to perform
the two-timescale beamforming based on the estimated CCM.
Simulation results showed that our proposed method can provide a
reliable CCM estimate using a small amount of training overhead.

\appendices

\section{Proof of Theorem 1}
\label{A1} Based on the definition, the trace of a PSD matrix is equivalent to its nuclear norm. For
simplicity, we consider the following equivalent form
of~\eqref{sdp}:
\begin{align}
\bm{\hat{ R}}_h = \arg\min_{\bm{{ R}}_h} \frac{1}{2} \left\|\bm
{\hat R}_y - \bm W\bm{{ R}}_h\bm W^H\right\|_F^2 +\lambda \|\bm{{
R}}_h\|_* \label{opt1}
\end{align}
where $\|\cdot\|_*$ denotes the nuclear norm.

In order to prove Theorem~\ref{t1}, we first introduce the
following theorem~\cite{NegahbanRavikumar12}.
\begin{theorem}
For the convex optimization problem
\begin{align}
\widehat{\bm \Theta}_{\lambda_{n}} \in \arg \min _{\bm
\Theta}\left\{\mathcal{L}\left(\bm \Theta \right)+\lambda_{n}
\mathcal{R}(\bm \Theta)\right\} \label{pro_theorem2}
\end{align}
where $\lambda_{n}> 0$ is a user-defined regularization parameter
and $\mathcal{R}(\cdot)$ is a norm. Suppose that $\mathcal{L}$ is
a convex and differentiable function, and consider any optimal
solution $\widehat{\bm \Theta}$ to the aforementioned optimization
problem with a strictly positive regularization parameter
satisfying
\begin{align}
\lambda_n \ge 2\mathcal{R}^*(\nabla\mathcal{L}\left(\bm \Theta^*
\right))
\end{align}
where $\mathcal{R}^*(\cdot)$ is the dual norm of
$\mathcal{R}(\cdot)$ and $\bm \Theta^*$ is the unknown true value.
Denote $\mathcal{M}\subseteq \bar{\mathcal {M}}$ as the subspace
to capture the constraints specified by the norm-based regularizer
and $\bar{\mathcal {M}}^{\perp}$ as the orthogonal complement of
space $\bar{\mathcal {M}}$. Then for any pair
$\left(\mathcal{M},\bar{\mathcal {M}}^{\perp}\right)$ over which
$\mathcal{R}$ is decomposable{\footnote{A norm-based regularizer
$\mathcal{R}$ is decomposable with respect to
$\left(\mathcal{M},\bar{\mathcal {M}}^{\perp}\right)$ if
\begin{align*}
\mathcal{R}(\bm\Theta + \bm\Gamma) = \mathcal{R}(\bm\Theta)+\mathcal{R}(\bm\Gamma)
\end{align*}
for all $\bm\Theta\in \mathcal{M}$ and $\bm\Gamma\in\bar{\mathcal
{M}}^{\perp}$. Details can be found
in~\cite{NegahbanRavikumar12}.}}, the error $\bm \Delta =
\widehat{\bm \Theta}_{\lambda_{n}} - \bm \Theta^*$ belongs to the
set
\begin{align}
&\mathbb{C}\left(\mathcal{M}, \bar {\mathcal{M}}^{\perp} ; \bm\Theta^{*}\right) \notag\\
&\quad \triangleq \left\{\bm \Delta |
\mathcal{R}\left(\bm \Delta_{\bar{\mathcal{M}}^{\perp}}\right)\right.\left.\leq
3 \mathcal{R}\left(\bm \Delta_{\bar{\mathcal{M}}}\right)+4
\mathcal{R}\left(\bm
\Theta_{\mathcal{M}^{\perp}}^{*}\right)\right\} \label{1}
\end{align}
where $\mathcal{M}^{\perp}$ is the orthogonal complement of the
space $\mathcal{M}$. In~\eqref{1} $\bm\Delta_{\bar{\mathcal{M}}}$
denotes the projection of $\bm\Delta$ onto the subspace
$\bar{\mathcal{M}}$, which is defined as
\begin{align}
\bm\Delta_{\bar{\mathcal{M}}} =\arg\min_{\bm\Delta_1\in \bar{\mathcal{M}}}\|\bm \Delta - \bm \Delta_1\|_F^2
\end{align}
${\bm \Delta}_{\bar{\mathcal{M}}^{\perp}}$ and $\bm
\Theta_{\mathcal{M}^{\perp}}^{*}$ can be similarly defined (and
hence we omit them here). Specifically, $\mathcal{R}\left(\bm
\Theta_{\mathcal{M}^{\perp}}^{*}\right) = 0$ when $\bm\Theta\in
\mathcal{M}$, and under such a circumstance~\eqref{1} turns to be
\begin{align}
\mathbb{C}\left(\mathcal{M}, \bar{\mathcal{M}}^{\perp} ; \bm
\Theta^{*}\right) \triangleq\left\{\bm \Delta |
\mathcal{R}\left(\bm \Delta_{\bar{\mathcal{M}}^{\perp}}\right)\right.\left.\leq
3 \mathcal{R}\left(\bm \Delta_{\bar{\mathcal{M}}}\right)\right\}
\label{eqn1}
\end{align}
\end{theorem}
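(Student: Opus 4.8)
The plan is to chain together three ingredients: the optimality of $\widehat{\bm\Theta}_{\lambda_n}$, the convexity of $\mathcal{L}$, and the decomposability of $\mathcal{R}$. First I would record the basic inequality. Writing $\bm\Delta = \widehat{\bm\Theta}_{\lambda_n} - \bm\Theta^*$ and using that $\widehat{\bm\Theta}_{\lambda_n}$ is a minimizer of the objective in~\eqref{pro_theorem2}, evaluating that objective at $\bm\Theta^*$ and rearranging gives
\begin{align}
\mathcal{L}(\bm\Theta^* + \bm\Delta) - \mathcal{L}(\bm\Theta^*) \le \lambda_n\big(\mathcal{R}(\bm\Theta^*) - \mathcal{R}(\bm\Theta^* + \bm\Delta)\big)
\end{align}
Since $\mathcal{L}$ is convex and differentiable, its first-order lower bound $\mathcal{L}(\bm\Theta^* + \bm\Delta) - \mathcal{L}(\bm\Theta^*) \ge \langle\nabla\mathcal{L}(\bm\Theta^*),\bm\Delta\rangle$ yields
\begin{align}
\langle\nabla\mathcal{L}(\bm\Theta^*),\bm\Delta\rangle \le \lambda_n\big(\mathcal{R}(\bm\Theta^*) - \mathcal{R}(\bm\Theta^* + \bm\Delta)\big)
\end{align}

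Next I would control the gradient term by duality. The generalized Cauchy--Schwarz inequality relating a norm and its dual gives $|\langle\nabla\mathcal{L}(\bm\Theta^*),\bm\Delta\rangle| \le \mathcal{R}^*(\nabla\mathcal{L}(\bm\Theta^*))\,\mathcal{R}(\bm\Delta)$, and the hypothesis $\lambda_n \ge 2\mathcal{R}^*(\nabla\mathcal{L}(\bm\Theta^*))$ then produces $|\langle\nabla\mathcal{L}(\bm\Theta^*),\bm\Delta\rangle| \le \tfrac{\lambda_n}{2}\mathcal{R}(\bm\Delta)$. Substituting this into the previous display and dividing by $\lambda_n > 0$ isolates the key scalar inequality
\begin{align}
\mathcal{R}(\bm\Theta^* + \bm\Delta) \le \mathcal{R}(\bm\Theta^*) + \tfrac{1}{2}\mathcal{R}(\bm\Delta)
\end{align}

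The final and most delicate step converts this into the cone condition via decomposability. I would split $\bm\Theta^* = \bm\Theta^*_{\mathcal{M}} + \bm\Theta^*_{\mathcal{M}^\perp}$ and $\bm\Delta = \bm\Delta_{\bar{\mathcal{M}}} + \bm\Delta_{\bar{\mathcal{M}}^\perp}$. To lower bound the left-hand side, I would apply the triangle inequality and then invoke decomposability of $\mathcal{R}$ over $(\mathcal{M},\bar{\mathcal{M}}^\perp)$ on the cross term, using $\bm\Theta^*_{\mathcal{M}}\in\mathcal{M}$ and $\bm\Delta_{\bar{\mathcal{M}}^\perp}\in\bar{\mathcal{M}}^\perp$, to obtain
\begin{align}
\mathcal{R}(\bm\Theta^* + \bm\Delta) \ge \mathcal{R}(\bm\Theta^*_{\mathcal{M}}) + \mathcal{R}(\bm\Delta_{\bar{\mathcal{M}}^\perp}) - \mathcal{R}(\bm\Theta^*_{\mathcal{M}^\perp}) - \mathcal{R}(\bm\Delta_{\bar{\mathcal{M}}})
\end{align}
Bounding the right-hand side of the key inequality by $\mathcal{R}(\bm\Theta^*) \le \mathcal{R}(\bm\Theta^*_{\mathcal{M}}) + \mathcal{R}(\bm\Theta^*_{\mathcal{M}^\perp})$ and $\mathcal{R}(\bm\Delta) \le \mathcal{R}(\bm\Delta_{\bar{\mathcal{M}}}) + \mathcal{R}(\bm\Delta_{\bar{\mathcal{M}}^\perp})$, canceling the common $\mathcal{R}(\bm\Theta^*_{\mathcal{M}})$, and collecting the $\mathcal{R}(\bm\Delta_{\bar{\mathcal{M}}^\perp})$ terms yields $\mathcal{R}(\bm\Delta_{\bar{\mathcal{M}}^\perp}) \le 3\mathcal{R}(\bm\Delta_{\bar{\mathcal{M}}}) + 4\mathcal{R}(\bm\Theta^*_{\mathcal{M}^\perp})$, which is exactly~\eqref{1}; the specialization~\eqref{eqn1} then follows since $\bm\Theta^*\in\mathcal{M}$ forces $\bm\Theta^*_{\mathcal{M}^\perp} = \bm 0$.

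I expect the main obstacle to be the decomposability bookkeeping in the last step: one must arrange the triangle-inequality splittings so that the decomposability identity applies to precisely the cross term $\bm\Theta^*_{\mathcal{M}} + \bm\Delta_{\bar{\mathcal{M}}^\perp}$, and track carefully which pieces are added versus subtracted so that the constants $3$ and $4$ emerge correctly. The analytic content (optimality, convexity, duality) is routine; the projection algebra is where an error would most likely creep in.
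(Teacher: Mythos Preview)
Your argument is correct and is precisely the standard proof from \cite{NegahbanRavikumar12}; the paper itself does not give a self-contained proof of this theorem but simply refers the reader to that reference, so there is nothing further to compare.
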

\begin{proof}
See~\cite{NegahbanRavikumar12}.
\end{proof}


Theorem 2 implies that with a proper regularization parameter and
$\bm\Theta \in \mathcal{M}$, the estimation error of the
problem~\eqref{pro_theorem2} satisfies (\ref{eqn1}). We know that
$\text{rank}(\bm R_h) = r$ and the SVD of $\bm R_h$ is given as
$\bm R_h = \bm U\bm\Sigma\bm V^H$. Let $\text{row}(\bm R_h)$ and
$\text{col}(\bm R_h)$ denote the row and column space of $\bm R_h$
respectively, and meanwhile let $\bm U^r$ and $\bm V^r$ be the
first $r$ columns of $\bm U$ and $\bm V$. We now can define the
subspace $\mathcal{M}$ and $\bar{\mathcal{M}}^{\perp}$ as
\begin{align}
&\mathcal{M} = \bar{\mathcal{M}}\triangleq \{\bm R|\text{row}(\bm R)=\bm V^r,\text{col}(\bm R)=\bm U^r\}\\
&\bar{\mathcal{M}}^{\perp}\triangleq\{\bm R|\text{row}(\bm
R)={(\bm V^r)}^{\perp},\text{col}(\bm R)={(\bm U^r)}^{\perp}\}
\end{align}
Utilizing these defined subspaces, we can conclude that $\bm R_h
\in \mathcal{M}$, and meanwhile the estimation error of $\bm R_h$,
i.e., $\bm \Delta\triangleq \bm{\hat R}_h - \bm R_h$, can be
decomposed into two parts, i.e.,
\begin{align}
\bm \Delta = \bm \Delta_1 + \bm \Delta_2
\end{align}
with $\text{rank}(\bm \Delta_1) = r$, $\bm \Delta_1\in\mathcal{M}$
and $\bm \Delta_2\in\bar{\mathcal{M}}^{\perp}$.

In our problem, $\mathcal{L}\left(\bm \Theta \right) =
\frac{1}{2}\|\bm {\hat{R}}_{y} - \bm W\bm R_h\bm W^H\|_F^2$.
Therefore, we have
\begin{align}
\nabla\mathcal{L}\left(\bm R_h \right) = \frac{1}{2}\left[\bm
W^H(\bm W\bm R_h\bm W^H-\bm {\hat {R}}_{y})\bm W\right]^T
\end{align}
In addition, it is clear that the dual norm of the nuclear norm is
the spectral norm. Therefore, if the regularization parameter
$\lambda$ satisfies
\begin{align}
\lambda \ge 2\mathcal{R}^*(\nabla\mathcal{L}\left(\bm R_h \right)) &= \|\bm W^H(\bm {\hat {R}}_{y}-
\bm W \bm R_h\bm W^H)\bm W\|_2\notag\\
&= \|\bm W^H(\bm {\hat {R}}_{y}- \bm R_{y})\bm W\|_2
\label{lambda_setup}
\end{align}
then the estimation error, based on Theorem 2, belongs to the set
\begin{align}
\bm \Delta &\triangleq \bm{\hat R}_h - \bm R_h =\mathbb{T}_3(\hat{\bm V}) -
\mathbb{T}_3({\bm V}) =  \mathbb{T}_3(\hat{\bm V}-{\bm V})\notag\\
&\in \{\bm \Delta|\bm\Delta = \bm \Delta_1 + \bm \Delta_2,\ \|\bm
\Delta_2\|_*\le 3\|\bm\Delta_1\|_*\} \label{three_times}
\end{align}

From~\eqref{lambda_setup}, we can see that the choice of $\lambda$
depends on the value of $\|\bm W^H(\bm {\hat {R}}_{y}-\bm R_y)\bm
W\|_2$. The following lemma provides an upper bound on $\|\bm
W^H(\bm {\hat {R}}_{y}-\bm R_y)\bm W\|_2$.
\begin{lemma} \label{lemma1}
Given $T$ observation samples $\{\bm y_t\}_{t=1}^T$, $\bm{\hat{
R}}_y$ is obtained via $\frac{1}{T} \sum_{t=1}^T\bm y_t\bm y_t^H$.
Then with probability at least $1-4T^{-1}$, we have
\begin{align}
&\|\bm W^H (\bm {\hat{R}}_{y} - \bm R_y)\bm W\|_2\notag\\
&\qquad\qquad\le \delta\triangleq c \|\bm W\|_F^2 \|\bm R_y
\|_2\text{max}\{\sqrt{ \tilde{\delta}},\tilde{\delta}\}
\end{align}
where $\tilde{\delta}$ is given in~\eqref{sigma1}.
\end{lemma}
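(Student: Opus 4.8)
The plan is to reduce the bound to a dimension-free concentration inequality for a sample covariance matrix and then absorb the two factors of $\bm W$ by an elementary operator-norm estimate. First I would use the noise-free setting of this section: there $\bm y_t=\bm W\bar{\bm h}_t$, so the measurement vectors $\{\bm y_t\}_{t=1}^T$ are i.i.d., zero-mean, with covariance $\bm R_y=\bm W\bm R_h\bm W^H$, and $\bm{\hat R}_y-\bm R_y=\frac{1}{T}\sum_{t=1}^T(\bm y_t\bm y_t^H-\bm R_y)$ is an average of i.i.d.\ centered Hermitian matrices. Since $\bm W^H(\bm{\hat R}_y-\bm R_y)\bm W$ is Hermitian,
\[
\|\bm W^H(\bm{\hat R}_y-\bm R_y)\bm W\|_2\le\|\bm W\|_2^2\,\|\bm{\hat R}_y-\bm R_y\|_2\le\|\bm W\|_F^2\,\|\bm{\hat R}_y-\bm R_y\|_2,
\]
using $\|\bm W\|_2\le\|\bm W\|_F$. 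Hence it suffices to show that, with probability at least $1-4T^{-1}$, $\|\bm{\hat R}_y-\bm R_y\|_2\le c\|\bm R_y\|_2\max\{\sqrt{\tilde\delta},\tilde\delta\}$ with $\tilde\delta=r_e(\bm R_y)\log(TJ)/T$; multiplying by $\|\bm W\|_F^2$ then yields $\delta$.

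For this covariance concentration I would run a truncated matrix Bernstein argument, which is essentially the ingredient adapted from the structured-covariance estimation literature (cf.\ \cite{LiChi15,NegahbanRavikumar12}). Split each summand as $\bm y_t\bm y_t^H=\bm y_t\bm y_t^H\mathbf{1}\{\|\bm y_t\|^2\le\tau\}+\bm y_t\bm y_t^H\mathbf{1}\{\|\bm y_t\|^2>\tau\}$ with truncation level $\tau$ of order $\text{Tr}(\bm R_y)=r_e(\bm R_y)\|\bm R_y\|_2$ (up to lower-order terms). A union bound over $t=1,\dots,T$ using the sub-exponential concentration of $\|\bm y_t\|^2$ about $\text{Tr}(\bm R_y)$ shows the truncated-away part is identically zero for all $t$ with probability at least $1-2T^{-1}$, and the corresponding truncation bias in $\mathbb{E}[\bm y_t\bm y_t^H]$ is negligible for this choice of $\tau$. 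On that event the remaining average consists of matrices of operator norm $\lesssim\tau$, and the per-sample matrix variance obeys $\|\mathbb{E}[(\bm y_t\bm y_t^H-\bm R_y)^2]\|_2\lesssim\text{Tr}(\bm R_y)\|\bm R_y\|_2=r_e(\bm R_y)\|\bm R_y\|_2^2$ (the identity $\mathbb{E}[(\bm y\bm y^H-\bm R)^2]=\text{Tr}(\bm R)\bm R$ being exact in the Gaussian case). Matrix Bernstein with deviation level $\asymp\log(TJ)$ then gives $\|\bm{\hat R}_y-\bm R_y\|_2\lesssim\|\bm R_y\|_2\big(\sqrt{r_e(\bm R_y)\log(TJ)/T}+r_e(\bm R_y)\log(TJ)/T\big)=\|\bm R_y\|_2\max\{\sqrt{\tilde\delta},\tilde\delta\}$ with probability at least $1-2T^{-1}$; intersecting the two events gives probability at least $1-4T^{-1}$, and absorbing constants finishes the proof.

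The main obstacle is the second step, and specifically the heavy tails of $\bm y_t\bm y_t^H$. In this model $\bar{\bm h}_t=\bm F\bm x_t$, where the nonzero entries of $\bm x_t$ are \emph{products} $\alpha_l\beta_p^*$ of independent complex Gaussians, so $\bm y_t$ is only sub-exponential, and $\bm y_t\bm y_t^H$ is heavier still; arranging the truncation level, the union-bound exponents, and the Bernstein deviation so that they combine into exactly the effective-rank, single-$\log(TJ)$ quantity $\tilde\delta$ (rather than the ambient dimension $J$ or spurious extra logarithms) is the delicate part. If one adopts the usual Gaussian channel approximation $\bar{\bm h}_t\sim\mathbb{CN}(\bm 0,\bm R_h)$, the relevant fourth-moment identities are exact and this reduces to a textbook computation; otherwise one substitutes sub-exponential (Hanson--Wright / Bernstein) tail bounds in their place. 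Everything else — the two operator-norm inequalities and the final substitution — is routine.
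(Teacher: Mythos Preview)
Your proposal is correct, and the operator-norm reduction step --- bounding $\|\bm W^H(\bm{\hat R}_y-\bm R_y)\bm W\|_2$ by $\|\bm W\|_F^2\,\|\bm{\hat R}_y-\bm R_y\|_2$ --- is exactly what the paper does (it routes through the Frobenius norm, you through $\|\bm W\|_2\le\|\bm W\|_F$; either works). The difference lies entirely in how the covariance concentration $\|\bm{\hat R}_y-\bm R_y\|_2\le c\|\bm R_y\|_2\max\{\sqrt{\tilde\delta},\tilde\delta\}$ is obtained. The paper does not rederive it: it simply \emph{asserts} that $\bm y_t\sim\mathbb{CN}(\bm 0,\bm R_y)$ and invokes Theorem~2.2 of Bunea--Xiao~\cite{BuneaXiao15} as a black box, which delivers exactly that bound with the stated probability $1-4T^{-1}$ and the effective-rank quantity $\tilde\delta$. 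Your truncated matrix Bernstein argument is essentially what underlies results of that type, so you are re-proving the cited lemma rather than citing it.

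Your concern about the tails is well-founded and is something the paper glosses over: under the channel model the nonzero entries of $\bm x_t$ are products $\alpha_l\beta_p^*$ of independent Gaussians, so $\bm y_t$ is not literally complex Gaussian and the hypothesis of~\cite{BuneaXiao15} is not strictly met. The paper sidesteps this by treating $\bm y_t$ as Gaussian (effectively the approximation you mention); your sub-exponential route would actually close that gap, at the price of a longer argument and a possibly worse constant $c$.
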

\begin{proof}
See Appendix~\ref{ap2}.
\end{proof}

Since $\bm{\hat R}_h$ is the optimal solution to (\ref{opt1}), we
have
\begin{align}
&\frac{1}{2}\left\|\bm {\hat {R}}_{y} - \bm W \bm{\hat R}_h \bm W^H\right\|_F^2 +
\lambda \|\bm{\hat R}_h\|_* \notag\\
&\qquad \le \frac{1}{2}\left\|\bm {\hat {R}}_{y} - \bm W {\bm R}_h
\bm W^H\right\|_F^2 + \lambda \|{\bm R}_h\|_*
\end{align}
which can be converted to
\begin{align}
\left\|\bm {\hat {R}}_{y} - \bm W \bm{\hat R}_h \bm W^H\right\|_F^2 &-
\left\|\bm {\hat {R}}_{y} - \bm W {\bm R}_h \bm W^H\right\|_F^2\notag\\
 &\qquad\le 2\lambda\Big(\|{\bm R}_h\|_* - \|\bm{\hat R}_h\|_*\Big)
\label{2}
\end{align}
Recalling $\bm \Delta= \bm{\hat R}_h - \bm R_h$, the left side
of~\eqref{2} can be further expressed as
\begin{align}
&\left\|\bm {\hat {R}}_{y} - \bm W \hat{\bm R}_h \bm W^H\right\|_F^2 -
\left\|\bm {\hat {R}}_{y} - \bm W {\bm R}_h \bm W^H\right\|_F^2\notag\\
& = \left\|\bm {\hat {R}}_{y} - \bm W ({\bm R}_h + \bm \Delta ) \bm W^H\right\|_F^2 -
\left\|\bm {\hat {R}}_{y} - \bm W {\bm R}_h \bm W^H\right\|_F^2\notag\\
& = 2\langle-\bm {\hat {R}}_{y} + \bm W {\bm R}_h\bm W^H, \bm W\Delta\bm W^H \rangle +
\left\|\bm W\bm \Delta\bm W^H\right\|_F^2 \notag \\
&= 2\langle \bm W^H(-\bm {\hat {R}}_{y} + \bm W {\bm R}_h\bm
W^H)\bm W,\bm \Delta \rangle + \left\|\bm W\bm \Delta\bm
W^H\right\|_F^2 \label{3}
\end{align}
Substituting~\eqref{3} into~\eqref{2}, we have
\begin{align}
&\left\|\bm W\bm \Delta\bm W^H\right\|_F^2 \notag\\
&\le 2\langle \bm W^H(\bm {\hat {R}}_{y} - \bm W {\bm R}_h\bm W^H)\bm W,\bm \Delta \rangle +
2\lambda\Big(\|{\bm R}_h\|_* - \|\hat{\bm R}_h\|_*\Big)\notag\\
&\overset{(a)}{\le} 2\|\bm W^H(\bm{\hat {R}}_{y} - \bm W {\bm R}_h\bm W^H)\bm W \| \|\bm \Delta \|_*\notag\\
&\qquad\qquad\qquad\qquad\qquad+ 2\lambda\Big(\|\bm{\hat R}_h+\bm\Delta\|_* - \|\bm{\hat R}_h\|_*\Big)\notag\\
&\overset{(b)}{\le}2 \|\bm W^H(\bm{\hat {R}}_{y} - \bm W {\bm R}_h\bm W^H)\bm W \| \|\bm \Delta\|_*+
2\lambda \|\bm \Delta\|_*\notag\\
& \overset{(c)}{\le} 4\lambda \left\|\bm \Delta\right\|_*
\label{use1}
\end{align}
where $(a)$ follows from Holder's inequality, $(b)$ comes from the
triangle inequality, and $(c)$ follows from the assumption
$\lambda \ge \delta$. Furthermore, we have
\begin{align}
\left\|\bm \Delta\right\|_* &= \left\|\bm \Delta_1 + \bm \Delta_2\right\|_* \le
\left\|\bm \Delta_1\right\|_*+\left\| \bm \Delta_2\right\|_*\notag\\
&\overset{(a)}{\le} \left\|\bm \Delta_1\right\|_*+3\left\| \bm \Delta_1\right\|_*\notag\\
& \overset{(b)}{\le} 4 \sqrt{r}\|\bm \Delta_1\|_F \notag\\
&\le 4 \sqrt{r} \left\|\bm \Delta\right\|_F = 4\sqrt{r}
\left\|\mathbb{T}_3(\hat{\bm V} - \bm V)\right\|_F\notag\\
&\le 4 \sqrt{rNM} \|\hat{\bm V} - \bm V\|_F \label{use2}
\end{align}
where $(a)$ follows from~\eqref{three_times}, and $(b)$ is a
result of the relationship between the nuclear norm and F-norm of
a rank-$r$ matrix. Putting~\eqref{use1} and~\eqref{use2} together
results in
\begin{align}
\left\|\bm W\bm \Delta\bm W^H\right\|_F^2 \le 16\lambda\sqrt{rNM}
\|\hat{\bm V} - \bm V\|_F \label{re1}
\end{align}

Furthermore, we utilize the following lemma to find a lower bound
of $\left\|\bm W\bm \Delta\bm W^H\right\|_F^2$.

\begin{lemma}
\label{lemma2} Consider a 3-level Toeplitz matrix $\mathbb{T}_3(
\bm{{ X}})$ with $\bm X\in\mathbb{C}^{(2I_1-1)\times
(2I_2-1)\times (2I_3-1)}$ and the matrix $\bm
W\in\mathbb{C}^{I^2\times I_1 I_2 I_3}$. If
\begin{align}
I\ge\sqrt{(2I_1-1) (2I_2-1) (2I_3-1)} \label{con1}
\end{align}
then with high probability there exists a full-column rank
transforming matrix $\bm {\check W}$ of $\boldsymbol{W}$ such that
\begin{align}
\|\bm W\mathbb{T}_3( \bm{{X}})\bm W^H\|_F^2\ge
\sigma_{\text{min}}^2(\bm {\check W})\|\bm{{X}}\|_F^2
\end{align}
where $\sigma_{\text{min}}(\bm {\check W})$ is the smallest
singular value of $\bm {\check W}$.
\end{lemma}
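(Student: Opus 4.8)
The plan is to reduce the inequality to the positivity of the smallest singular value of the transforming matrix $\bm{\check W}$, and then to prove that $\bm{\check W}$ has full column rank with high probability under \eqref{con1}. First I would invoke the definition of the transforming matrix: reshaping $\bm X$ into the vector $\bm x$ (an isometry, so $\|\bm x\|_2=\|\bm X\|_F$) gives $\text{vec}(\bm W\mathbb{T}_3(\bm X)\bm W^H)=\bm{\check W}\bm x$, hence
\begin{align}
\|\bm W\mathbb{T}_3(\bm X)\bm W^H\|_F^2=\|\bm{\check W}\bm x\|_2^2\ge\sigma_{\text{min}}^2(\bm{\check W})\,\|\bm x\|_2^2=\sigma_{\text{min}}^2(\bm{\check W})\,\|\bm X\|_F^2,
\end{align}
where the middle inequality holds for any matrix because $\bm{\check W}^H\bm{\check W}\succcurlyeq\sigma_{\text{min}}^2(\bm{\check W})\bm I$, with $\sigma_{\text{min}}$ the smallest among the $(2I_1-1)(2I_2-1)(2I_3-1)$ singular values. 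This step is purely formal; the real content of the lemma is that $\sigma_{\text{min}}(\bm{\check W})>0$, i.e.\ that $\bm{\check W}$ is full column rank --- otherwise the bound is vacuous, and, as \eqref{v_error} shows, $\sigma_{\text{min}}^2(\bm{\check W})$ must be positive because it appears in a denominator.

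Next I would prove the full-column-rank claim. Note that $\bm{\check W}\in\mathbb{C}^{I^2\times(2I_1-1)(2I_2-1)(2I_3-1)}$ and that \eqref{con1} is exactly $I^2\ge(2I_1-1)(2I_2-1)(2I_3-1)$, so $\bm{\check W}$ has at least as many rows as columns. Writing the Toeplitz constraint as $\text{vec}(\mathbb{T}_3(\bm X))=\bm\Pi\bm x$ with $\bm\Pi$ a fixed $0$/$1$ selection matrix whose columns have pairwise disjoint supports, we have $\bm{\check W}=(\bm W^*\otimes\bm W)\bm\Pi$, so $\bm{\check W}$ is rank-deficient if and only if $\bm W\mathbb{T}_3(\bm X)\bm W^H=0$ for some nonzero tensor $\bm X$, equivalently $\det(\bm{\check W}^H\bm{\check W})=0$. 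Each entry of $\bm{\check W}$ is bilinear in the entries of $\bm W$ and $\bm W^*$, so $\det(\bm{\check W}^H\bm{\check W})$ is a fixed real polynomial in the parameters that generate $\bm W$; since a real-analytic function on a connected domain is either identically zero or vanishes only on a set of measure zero, it suffices to exhibit a single admissible $\bm W$ for which $\bm{\check W}$ is full column rank, after which the rank-deficient set is negligible and full column rank holds with probability one (a fortiori with high probability) over the randomly drawn training matrix.

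For that one example I would take the rows of $\bm W$ to be samples of a $3$-level complex exponential, $\bm w_j^T=\bm a(\phi_{1,j},I_1)\otimes\bm a(\phi_{2,j},I_2)\otimes\bm a(\phi_{3,j},I_3)$ --- an admissible choice in the setting of Theorem~\ref{t1}, since each $\bm a(\phi,\cdot)$ has unit-modulus entries and the rows have Kronecker form. Expanding $\mathbb{T}_3(\bm X)$ in the diagonal-shift basis, $\bm w_j\mathbb{T}_3(\bm X)\bm w_k^H$ becomes an explicit $3$-dimensional trigonometric sum in $(\bm\phi_j,\bm\phi_k)$ whose coefficients are the entries of $\bm X$ scaled by nonzero multiplicities $(I_1-|d_1|)(I_2-|d_2|)(I_3-|d_3|)$; collecting all $I^2$ such products yields a generalized multivariate Vandermonde system in $\bm x$ that, for generic sample points $\{\bm\phi_j\}$, has full column rank precisely when $I^2\ge(2I_1-1)(2I_2-1)(2I_3-1)$, which forces $\bm x\mapsto\bm W\mathbb{T}_3(\bm X)\bm W^H$ to be injective. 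The hard part will be establishing this invertibility --- in particular handling the off-diagonal terms $j\ne k$, where the left and right sample points differ and the contributions couple across the three Toeplitz levels; this is the multi-level extension of the single-level argument in \cite{LiChi15}, and it is the only place where the dimension count \eqref{con1} genuinely enters.
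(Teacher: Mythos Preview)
Your approach is essentially the same as the paper's: vectorize via the Kronecker identity, write $\text{vec}(\bm W\mathbb{T}_3(\bm X)\bm W^H)=\bm{\check W}\bm x$ (the paper describes $\bm{\check W}$ as obtained by ``combining those columns in $\bm W^*\otimes\bm W$ which correspond to the same element,'' exactly your $(\bm W^*\otimes\bm W)\bm\Pi$), and then bound below by $\sigma_{\text{min}}^2(\bm{\check W})\|\bm x\|_2^2=\sigma_{\text{min}}^2(\bm{\check W})\|\bm X\|_F^2$. The only difference is that the paper's proof simply \emph{asserts} the full-column-rank claim (``$\bm{\check W}$ is a full-column rank matrix with high probability due to the fact that $\bm W$ is a random matrix and the condition \eqref{con1} holds'') without further justification, whereas you supply an argument --- the zero-set-of-a-polynomial reduction plus an explicit steering-vector witness --- that the paper omits; so your proposal is correct, matches the paper on the main step, and goes beyond it on the point the paper hand-waves.
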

\begin{proof}
See Appendix~\ref{ap3}.
\end{proof}

Since we have
\begin{align}
J\ge u\triangleq\sqrt{(2N-1)(2M_v-1)(2M_h-1)}
\end{align}
and meanwhile $\bm \Delta = \mathbb{T}_3(\bm{\hat V}-\bm V)$ is a
3-level Toeplitz matrix, based on Lemma~\ref{lemma2}, we know that, with high probability the following holds
\begin{align}
\|\bm W\bm \Delta\bm W^H\|_F^2 \ge  \sigma_{\text{min}}^2(\bm
{\check W} )\|\hat{\bm V} - \bm V\|_F^2 \label{re2}
\end{align}
where $\bm {\check W} $ is the transforming matrix of $\bm W$.
Substituting~\eqref{re2} into~\eqref{re1} leads to
\begin{align}
\|\bm{\hat V} - \bm
V\|_F\le\frac{16\lambda\sqrt{rNM}}{\sigma_{\text{min}}^2(\bm
{\check W})}
\end{align}
As a result, the average per-entry RMSE is given by
\begin{align}
\frac{1}{u}\|\bm{\hat V} - \bm V\|_F\le
\frac{16\lambda\sqrt{r}}{\sigma^2_{\text{min}}(\bm {\check W})}
\frac{\sqrt{NM}}{u}
\end{align}
which completes the proof.

\section{Proof of Lemma \ref{lemma1}}
\label{ap2}
The received signal $\bm y_t$ follows a complex Gaussian
$\mathbb{CN}(0,\bm R_y)$. Therefore, according to Theorem 2.2
in~\cite{BuneaXiao15} we know that, with probability at least
$1-4T^{-1}$, the following inequality holds
\begin{align}
\| \bm {\hat {R}}_y- \bm R_y\|_2\le c\|\bm
R_y\|_2\text{max}\{\sqrt{\tilde{\delta}},\tilde{\delta}\}
\end{align}
where $c$ is a constant. Furthermore, we have
\begin{align}
&\|\bm W^H ( \bm {\hat {R}}_y- \bm R_y)\bm W\|_2\notag\\
&\overset{(a)}{\le} \|\bm W^H ( \bm {\hat {R}}_y- \bm R_y)\bm W\|_F\notag\\
&\overset{(b)}{\le} \|\bm W\|_F^2 \|( \bm {\hat {R}}_y- \bm R_y)\|_2\notag\\
&\le c \|\bm W\|_F^2 \|\bm R_y
\|_2\text{max}\{\sqrt{\tilde{\delta}},\tilde{\delta}\}
\end{align}
where $(a)$ follows from the fact that $\|\bm A\|_2\le\|\bm A\|_F$
and $(b)$ follows from $\|\bm A\bm B\|_F\le \|\bm A\|_2\|\bm
B\|_F$.

\section{Proof of Lemma \ref{lemma2}}
\label{ap3}
According to the Kronecker product property, we have
\begin{align}
\text{vec}(\bm W\mathbb{T}_3( \bm{{ X}})\bm W^H) & = (\bm
W^*\otimes \bm W)\text{vec}(\mathbb{T}_3( \bm{X})) \label{5}
\end{align}
Since $\mathbb{T}_3( \bm{X})$ is a 3-level Toeplitz matrix, we can
express \eqref{5} as
\begin{align}
(\bm W^*\otimes \bm W)\text{vec}(\mathbb{T}_3( \bm{{ X}})) = \bm
{\check W} \bm x
\end{align}
where $\bm x=\text{vec}(\bm X)$ and $\bm {\check W}\in
\mathbb{C}^{I^2\times (2I_1-1) (2I_2-1) (2I_3-1)}$ is
constructed by combining those columns in $\bm W^*\otimes \bm W$
which correspond to the same element in $\mathbb{T}_3( \bm{X})$.
Furthermore, $\bm {\check W}$ is a full-column rank matrix with
high probability due to the fact that $\bm W$ is a random matrix
and the condition \eqref{con1} holds. Since $\bm {\check W}$ is a
full-column rank matrix, we have
\begin{align}
\|\bm {\check W} \bm x\|_2^2 \ge \sigma_{\text{min}}^2(\bm {\check
W})\|\bm x\|_2^2 \equiv \sigma_{\text{min}}^2 (\bm {\check
W})\|\bm{X }\|_F^2
\end{align}
Due to that fact that $\|\bm A\|_F^2 = \|\text{vec}(\bm A)\|_2^2$
for an arbitrary matrix $\bm A$, we have
\begin{align}
\|\bm W\mathbb{T}_3( \bm{X})\bm W^H\|_F^2 = \|\bm {\check W} \bm
x\|_2^2 \ge \sigma_{\text{min}}^2(\bm {\check W})\|\bm{X}\|_F^2
\end{align}
which completes the proof.

\section{Multi-level Toeplitz Matrix}
\label{ap4} Given a $d$-order tensor $ \bm{{
U}}_d\in\mathbb{C}^{(2k_d-1)\times\cdots\times (2k_1-1)}$, its
corresponding $d$-level Toeplitz matrix
$\mathbb{T}_d(\bm{{ U}}_d)$ is defined in the following
recursive manner. For $d=1$, $\bm{{ U}}_d$ is essentially
a vector $\bm u_1 \in\mathbb{C}^{2k_1-1}$, in which case
$\mathbb{T}_1(\bm u_1)$ is given by
\begin{align}
\mathbb{T}_1(\bm u_1) =
\begin{bmatrix}
\bm u_1(0)& \bm u_1(1)&\cdots&\bm u_1(k_1-1)\\
\bm u_1(-1)&\bm u_1(0)&\cdots&\bm u_1(k_1-2)\\
\vdots&\vdots&\ddots&\vdots\\
\bm u_1(1-k_1)&\bm u_1(2-k_1)&\cdots&\bm u_1(0)\\
\end{bmatrix}
\label{MT1}
\end{align}
For $d \ge 2$, Let $\bm{{ U}}_{d-1}(i)=\bm{{
U}}_d(i,:,\cdots,:)$ with $i$ from $-k_d+1$ to $k_d-1$, and denote
$\mathbb{T}_{d-1}(\bm{{U}}_{d-1}(i))$ as $\bm D(i)$. We
have
\begin{align}
\mathbb{T}_d(\bm{{U}}_{d}) =
\begin{bmatrix}
\bm D(0)& \bm D(1)&\cdots&\bm D(k_d-1)\\
\bm D(-1)&\bm D(0)&\cdots&\bm D(k_d-2)\\
\vdots&\vdots&\ddots&\vdots\\
\bm D(1-k_d)&\bm D(2-k_d)&\cdots&\bm D(0)\\
\end{bmatrix}
\label{MT2}
\end{align}




\end{document}